\newtheorem{theorem}{Theorem}
\newtheorem{observation}{Observation}
\newtheorem{con}{Conjecture}
\newtheorem{lemma}{Lemma}
\begin{document}
\title{\textsc{Square-Root Finding Problem In Graphs, A~Complete Dichotomy Theorem.}}

\author{Babak Farzad$^{1}$ and Majid Karimi$^{2}$\\ 
\\\vspace{6pt} Department of Mathematics\\
Brock University, St. Catharines, ON., Canada\\
bfarzad@brocku.ca$^{1}$
mkarimi@brocku.ca$^{2}$}

\date{September, 2012}

\maketitle
\noindent \textbf{Keywords.} Graph roots, Graph powers, NP-completeness.
\begin{abstract}
\noindent
Graph $G$ is the square of graph $H$ if two vertices $x,y$ have an edge in
  $G$ if and only if $x,y$ are of distance at most two in $H$.
Given $H$ it is easy to compute its square $H^2$. Determining	
  if a given graph $G$ is the square of some graph is not easy in general.
  Motwani and Sudan \cite{MotSud1994} proved that it is NP-complete
  to determine if a given graph $G$ is the square of some
  graph.
  The graph introduced in their reduction is a graph that contains
  many triangles and is relatively dense.
  Farzad et al. \cite{Far2009} proved the NP-completeness for finding a square root for
  girth $4$ while they
  gave a polynomial time algorithm for computing a
  square root of girth at least six.
  Adamaszek and Adamaszek \cite{Adam2011}
  proved that if a graph has a square root of girth six
  then this square root is unique up to isomorphism.
  In this paper we consider the characterization and recognition
  problem of graphs that are square of graphs of girth at least five.
  We introduce a family of graphs with exponentially many non-isomorphic square roots,
  and as the main result of this paper we prove that the square root finding
  problem is NP-complete for square roots of girth five. This proof
  is providing the complete dichotomy theorem for square root
  problem in terms of the girth of the square roots.
  \end{abstract}
\section{Introduction}
Graph $G$ is called the $r^{th}$ \textit{power} of $H$ and $H$ is
called an $r^{th}$ \textit{root} of $G$, if $v$ is adjacent to $u$ in $G$
if and only if $d_{H}(v,u)\leq r$, where $d_{H}(v,u)$ is the distance between
$u$ and $v$ in graph $H$.
We are interested in the characterization and recognition of square graphs.
\textit{Root} and \textit{root finding} are concepts familiar to most
branches of mathematics.
Root and root finding for graph also is a basic operation in graph theory.
The complexity problem of root finding for graphs is
an extensively studied problem in algorithmic graph theory.

\par The main motivation for studying the complexity of checking if a given graph is
a certain power (square specifically) of another graph comes from
distributed computing. In a model introduced by Linial \cite{Linial1992},
the $r^{th}$ power of graph $H$ may represents the possible
flow of information in $r$ round of communication in a distributed network of
processors organized according to $H$. He introduced
a question about the characterization of this problem which is solved by
Motvani and Sudan \cite{MotSud1994}.

Mukhopadhyay \cite{Muk1967} showed that a graph $G$ is the
square of some graph if and only if
there exists a complete induced subgraph $G_i$ corresponding to each vertex
$v_i$ such that
\begin{itemize}
\item[-] $v_i \in G_i$;
\item[-] $v_i \in G_j$ if and only if 	$v_j \in G_i$;
\item[-] $\bigcup G_i=G$.
\end{itemize}
However Mukhopadhyay's theorem contains different aspects of the
\textit{maximum clique} problem which is an NP-hard problem.
Hence it does not benefit the study from a complexity point of view.

Ross and Harary \cite{RosHar1960} characterized squares of trees and showed
that tree square roots, when they exist, are unique up to isomorphism.
Motwani and Sudan \cite{MotSud1994} proved that it is NP-complete
to determine if a given graph has a square.
The graph introduced in their reduction is a graph that contains
many triangles and is relatively dense.
On the other hand, there are polynomial time algorithms
to compute the tree square root
\cite{LinSki1995,KeaCor1998,Lau2006,BraLeSri2006,ChaKoLu2006},
a bipartite square root \cite{Lau2006},
and a proper interval square root \cite{LauCor2004}.
Farzad et al. \cite{Far2009} provided an \textit{almost} dichotomy theorem
for the complexity of the recognition
problem in terms of \textit{the girth of the square roots}.
They provided a polynomial time characterization of square of
graphs with girth at least $6$. They proved that the square root (if it exists)
is unique up to isomorphism when the girth of square root is at least $7$.
They also proved the NP-completeness of the problem for square roots of
girth $4$. Adamaszek and Adamaszek~\cite{Adam2011} proved that 
the square root of a graph is unique up to isomorphism when the girth of
square root is at least $6$ if it exits.

A summary of the the study for square root finding problem
in terms of the girth of the square root
is presented in Table~\ref{table:SSF} (in this table $H$ is indicating a square root graph).

\begin{center}
\begin{table}[h]
\begin{center}
\begin{tabular}{|c|c|c|}
\hline
Girth&Complexity Class & Unique up to isomorphism\\
\hline
$g(H) = \infty$&$O(|V|+|E|)$ \cite{ChaKoLu2006}&Yes\\
\hline
$g(H)\geq 7$&$O(|V|\times|E|)$ \cite{Far2009}&Yes \cite{Far2009}\\
\hline
$g(H)\geq 6$&$O(|V|\times|E|)$ \cite{Far2009}&Yes \cite{Adam2011}\\
\hline
$g(H) \geq 5$&?&No\\
\hline
$g(H) \geq 4$&NP-complete \cite{Far2009}&No\\
\hline
$g(H) \geq 3$&NP-complete \cite{MotSud1994}&No\\
\hline
\end{tabular}
\vspace{.4cm}
\end{center}
\caption{Complexity of square root problem in terms of girth of square roots.}
\label{table:SSF}
\end{table}
\end{center}

The recognition problem has been open for square roots of girth $5$.
In Section~\ref{sec:girthfive} we show that this problem is NP-complete.
The result is providing a complete dichotomy complexity theorem for
square root problem.
We also generalize the graph introduced in \cite{Adam2011} to
construct a family of graphs with exponential number of
non-isomorphic square roots.


\par \textbf{Definitions and notations:}
All graphs considered are finite, undirected and simple.
Let $G=(V_G, E_G)$ be a graph with vertex set in $V_G$ and edge set $E_G$.
We denote the adjacency of two vertices $u$ and $v$ in graph
$G$, by $u \overset{G}{\sim} v$.
To show that $v$ is adjacent to every element of a set $A \subseteq V(G)$,
we use $v \overset{G}{\sim} A$.
The \emph{neighbourhood} of a vertex $v$ in graph $G$ denoted by $N_G(v)$
is the set all vertices in $G$ adjacent to $v$.
The \textit{closed neighbourhood} of $v$ in $G$ denoted by $N_G[v]$,
is its neighbourhood containing $v$ as well, i.e. $N_G[v]=N_G(v)\cup \{v\}$.
The cardinality of the set $N_{G}(v)$ is called
the \emph{degree} of $v$ in $G$. The minimum degree
of a graph $G$ is shown by $\delta_G$.

Let $d_G(x,y)$ be the length i.e., number of edges
of a shortest path in $G$ between $x$ and $y$.
Let $G^k=(V_G,E^k)$ with $xy \in E^k$ if and only if
$1\le d_G(x,y) \le k$, denote the {\em $k$-th power of $G$}.
If $G=H^k$ then $G$ is the $k$-th power of the graph $H$ and
$H$ is a \emph{$k$-th root} of $G$. Since the power of a graph $H$
is the union of the powers of the connected components of $H$,
we may assume that all graphs considered are connected.

A set of vertices $Q\subseteq V_G$ is called a \emph{clique} in $G$ if every two
distinct vertices in $Q$ are adjacent; a \emph{maximal clique} is a clique that is not
properly contained in another clique.
Given a set of vertices $X\subseteq V_G$, the subgraph induced by
$X$ is denoted by $G[X]$ and $G-X$ stands for $G[V(G) \setminus X]$.
If $X=\{a, b, c, \ldots\}$, we write $G[a, b, c, \ldots]$ for $G[X]$.
Also, we often identify a subset of vertices with the
subgraph induced by that subset, and vice versa.

The \textit{girth} of $G$, $g(G)$, is the smallest length of a cycle in $G$; in
case $G$ has no cycles, we set $g(G)=\infty$. In other words, $G$ has girth
$k$ if and only if $G$ contains a cycle of length $k$ but
does not contain any cycle of length $\ell=3, \ldots, k-1$.

A complete graph is one in which every two distinct vertices are
adjacent; a complete graph on $n$ vertices is also denoted by $K_n$.
A \emph{star} is a graph with \emph{at least two} vertices that
has a vertex adjacent to all vertices and the other vertices are pairwise non-adjacent.
A star on $n+1$ vertices is denoted by $K_{1,n}$.

Two graphs $G_1$ and $G_2$ are called \textit{isomorphic} when there is a
bijection $f$ from $V_{G_1}$ to $V_{G_2}$ such that for all vertices
$u,v \in V_{G_1}$: $u \overset{G_1}{\sim} v$ if and only if $f(u) \overset{G_2}{\sim} f(v)$.

\section{Graphs with Many Non-Isomorphic Square Roots of Girth Five}
\label{sec:many}
For a given graph $G$ if there exists $H$ where $G=H^2$ and
$g(H)\geq 6$, then $H$ is unique up to isomorphism \cite{Adam2011}.
However this is not true when the girth of $H$ is at least $5$.
For $G=K_5$, two graphs $K_{1,4}$ and $C_{5}$ are 
non-isomorphic square roots of $G$. These two graphs can be
used to introduce a family of non-isomorphic pairs of graphs with
the same square, see Figure~\ref{hhh}.

\begin{figure}[h]
\begin{center}
\includegraphics[scale=.43]{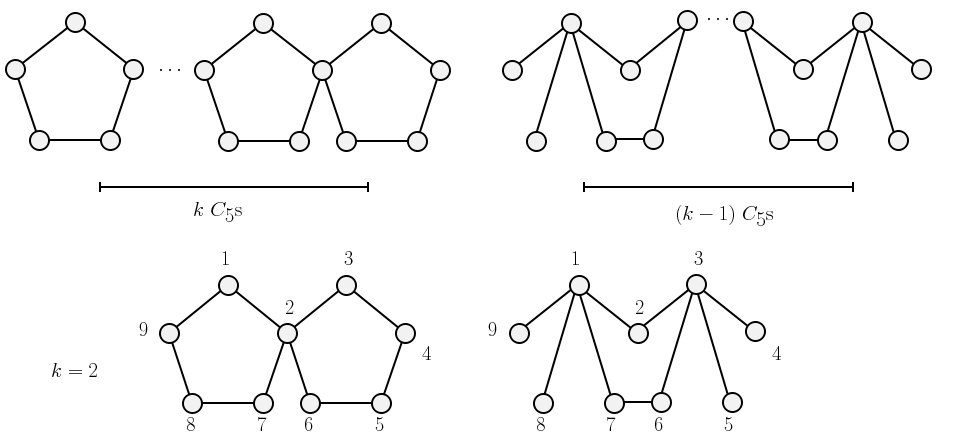}
\end{center}
\caption{A family of non-isomorphic graphs with identical square graph.}
\label{hhh}
\end{figure}

\par Notice that graphs in this family contain
vertices of degree $1$. Such vertices were a main
source of technicalities in the past studies.
\par In \cite{Adam2011} there is also an example of
a graph with two non-isomorphic square root of girth five, see Figure~\ref{fig:gonetwo}.
These two graphs are more interesting as, unlike graphs shown in
Figure~\ref{hhh}, they contain no vertex of degree $1$.
These two graphs are also the smallest non-isomorphic graphs with girth five,
minimum degree $2$ and identical squares.
In this paper, we call these two graphs $\mathcal{G}_1$ and $\mathcal{G}_2$.

\begin{figure}[H]
\begin{center}
\includegraphics[scale=.4]{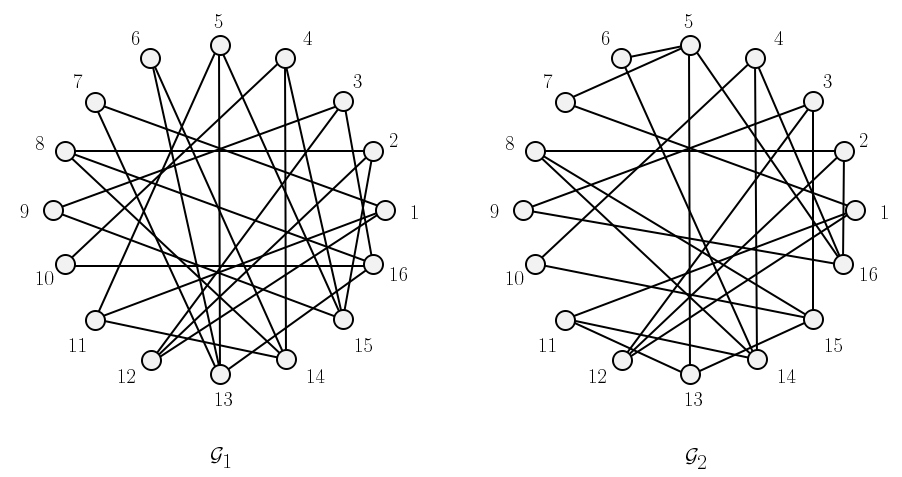}
\end{center}
\caption{Non-isomorphic graphs $\mathcal{G}_1$ and $\mathcal{G}_2$ with no vertex of degree $1$ and identical squares.}
\label{fig:gonetwo}
\end{figure}

\par It is also an
interesting question (from a complexity point of view)
to ask if there exists a 
graph with many non-isomorphic square roots.
We show that $\mathcal{G}_1$ and $\mathcal{G}_2$ can be used to construct
a family of graphs with many non-isomorphic square roots.
With current labelling of $\mathcal{G}_1$ and $\mathcal{G}_2$,
we have three vertices $1,12$
and $14$, that their neighbourhoods in both $\mathcal{G}_1$ and $\mathcal{G}_2$ are identical.
So we may identify two graphs on one of these three vertices to construct
a new graph with more than one square roots.
For example, we can identify vertex $1$ in both
$\mathcal{G}_1$ and $\mathcal{G}_2$
as shown in Figure~\ref{fig:geminifam}.

\begin{figure}[h]
\begin{center}
\includegraphics[scale=.38]{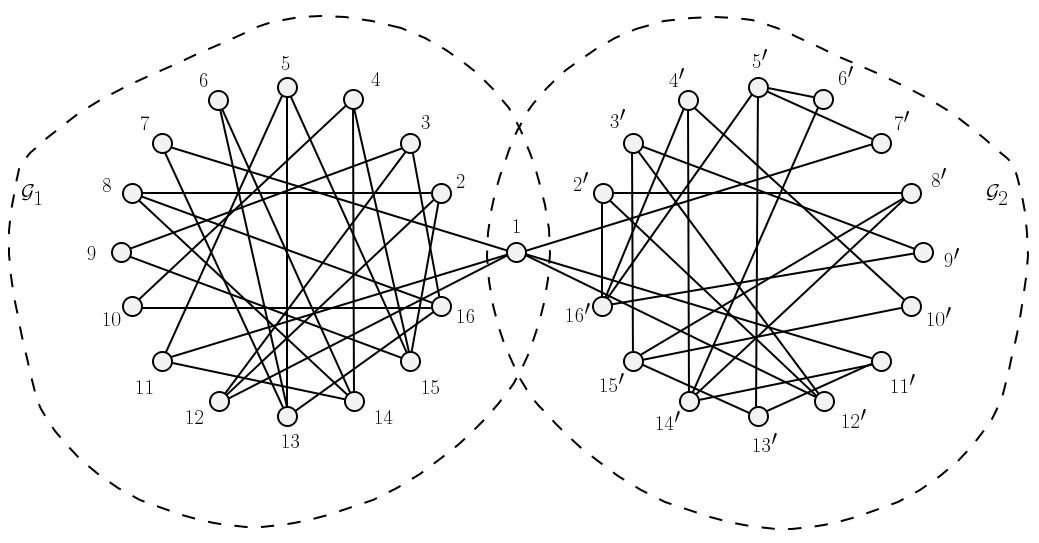}
\end{center}
\caption{Connecting $\mathcal{G}_1$ and $\mathcal{G}_2$
by identifying vertex $1$.}
\label{fig:geminifam}
\end{figure}

\begin{observation}
The square of the graph shown in Figure~\ref{fig:geminifam}
has three non-isomorphic square roots.
\end{observation}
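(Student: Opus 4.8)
The plan is to exhibit three square roots of the graph $G$ obtained by squaring the graph in Figure~\ref{fig:geminifam} (call that graph $\mathcal{H}$), and to verify they are pairwise non-isomorphic. The three roots are: $\mathcal{H}$ itself; the graph $\mathcal{H}_1$ obtained by replacing the copy of $\mathcal{G}_1$ inside $\mathcal{H}$ with a copy of $\mathcal{G}_2$ (glued along vertex $1$); and the graph $\mathcal{H}_2$ obtained by replacing the copy of $\mathcal{G}_2$ inside $\mathcal{H}$ with a copy of $\mathcal{G}_1$. So $\mathcal{H}$ is ``$\mathcal{G}_1 + \mathcal{G}_2$'' glued at vertex $1$, while $\mathcal{H}_1$ is ``$\mathcal{G}_2 + \mathcal{G}_2$'' and $\mathcal{H}_2$ is ``$\mathcal{G}_1 + \mathcal{G}_1$'', all glued along the common vertex $1$.

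The first step is to argue that all three have the same square. The key observation, already noted in the excerpt, is that in both $\mathcal{G}_1$ and $\mathcal{G}_2$ the vertex $1$ has the same closed neighbourhood, and more importantly the distance-$\le 2$ relation restricted to each side is determined entirely within that side except through vertex $1$. Since gluing happens at a single cut vertex $1$, any path of length $2$ in $\mathcal{H}$ (or $\mathcal{H}_1$, $\mathcal{H}_2$) either stays entirely within one of the two glued pieces, or it passes through $1$ and uses one neighbour of $1$ on each side. In the first case the adjacency in $G$ is exactly the adjacency in $\mathcal{G}_1^2 = \mathcal{G}_2^2$ restricted to that side, which is the same regardless of which root we chose for that side. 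In the second case, a vertex $u$ on one side at distance $1$ from $1$ becomes adjacent in the square to every vertex $w$ on the other side at distance $1$ from $1$; the set of such $u$ is $N(1)$ within that side, which is identical in $\mathcal{G}_1$ and $\mathcal{G}_2$. Hence the cross edges added in the square depend only on $N_{\mathcal{G}_i}(1)$, which is invariant. Therefore all three graphs square to the same $G$.

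The second step is to check girth: each of $\mathcal{G}_1, \mathcal{G}_2$ has girth $5$, and gluing two girth-$5$ graphs at a single vertex cannot create a shorter cycle (a cycle through the cut vertex would have to leave and return through that one vertex, forcing it to lie in a single block), so all three roots have girth $5$.

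The third and genuinely substantive step is to show $\mathcal{H}$, $\mathcal{H}_1$, $\mathcal{H}_2$ are pairwise non-isomorphic. The cleanest route is a counting/invariant argument: since $\mathcal{G}_1 \not\cong \mathcal{G}_2$, they differ on some isomorphism-invariant quantity — for instance the degree sequence, or the number of $5$-cycles, or the number of vertices of a given degree. Whatever local invariant $\phi$ distinguishes $\mathcal{G}_1$ from $\mathcal{G}_2$, one shows it is ``additive under gluing at vertex $1$'' up to a correction term that depends only on the local structure at $1$ (which is common), so that $\phi(\mathcal{H}_2) = 2\phi(\mathcal{G}_1) - c$, $\phi(\mathcal{H}) = \phi(\mathcal{G}_1) + \phi(\mathcal{G}_2) - c$, $\phi(\mathcal{H}_1) = 2\phi(\mathcal{G}_2) - c$, and these three values are distinct because $\phi(\mathcal{G}_1)\neq\phi(\mathcal{G}_2)$. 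I expect the main obstacle to be choosing $\phi$ so that the additivity is clean and the three values are provably distinct — degree sequences are the safest choice since a vertex's degree in $\mathcal{H}$ is unchanged from its degree in the piece it came from, except possibly at the identified vertex $1$ whose degree is $\deg_{\mathcal{G}_i}(1) + \deg_{\mathcal{G}_j}(1)$, the same in all cases. Then the multiset of degrees away from $1$ is just the disjoint union of the two pieces' degree multisets (minus vertex $1$), and distinctness of the three degree sequences follows directly from $\mathcal{G}_1$ and $\mathcal{G}_2$ having different degree sequences. If it turns out $\mathcal{G}_1$ and $\mathcal{G}_2$ happen to share a degree sequence, one falls back to counting $5$-cycles (which likewise localize to blocks) or to an explicit exhaustion using that these graphs are small.
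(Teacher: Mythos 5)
Your construction is the same as the paper's: the printed proof is a one-sentence assertion that swapping the copies of $\mathcal{G}_1$ and $\mathcal{G}_2$ glued at vertex $1$ yields three non-isomorphic graphs with the same square, so your write-up supplies exactly the details it omits. Your argument that all three graphs square to the same $G$ (the gluing vertex is a cut vertex, within-block distances are unchanged, and the cross edges in the square depend only on $N(1)$, which is identical in $\mathcal{G}_1$ and $\mathcal{G}_2$) is correct and is precisely the reason the paper identifies the two graphs at one of the vertices $1$, $12$, $14$ whose neighbourhoods agree. The one step that would fail as your first choice is the non-isomorphism argument via degree sequences: reading off the neighbourhoods computed in Appendix~A, both $\mathcal{G}_1$ and $\mathcal{G}_2$ have four vertices of degree $2$, eight of degree $3$ and four of degree $4$ (in $\mathcal{G}_1$ the degree-$4$ vertices are $13,14,15,16$; in $\mathcal{G}_2$ they are $5,14,15,16$), so all three glued graphs share one degree sequence and that invariant distinguishes nothing. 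You anticipated this and offered fallbacks; the clean one is the invariant you were circling around rather than $5$-cycle counts: assuming, as is readily checked, that $\mathcal{G}_1$ and $\mathcal{G}_2$ are $2$-connected, the two glued pieces are exactly the blocks of the resulting graph, and the multiset of isomorphism types of blocks --- $\{\mathcal{G}_1,\mathcal{G}_1\}$, $\{\mathcal{G}_1,\mathcal{G}_2\}$ or $\{\mathcal{G}_2,\mathcal{G}_2\}$ --- is an isomorphism invariant taking three distinct values because $\mathcal{G}_1\not\cong\mathcal{G}_2$. With that substitution your proof is complete and, unlike the paper's, actually verifies the non-isomorphism claim.
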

\begin{proof}
In Figure~\ref{fig:geminifam}, by replacing the copy of $\mathcal{G}_1$
on the vertices $\{1,2,\ldots,16\}$ with a copy of $\mathcal{G}_2$,
we would get a different graph with the same square.
Hence switching copies of $\mathcal{G}_1$ and $\mathcal{G}_2$
constructs three non-isomorphic graphs
with identical squares.
\end{proof}

The process of connecting $\mathcal{G}_1$s and $\mathcal{G}_2$s
by identifying one of those three vertices
can form a family of graphs with girth five, minimum degree of $2$
and exponentially many non-isomorphic square roots.
See Figure~\ref{fig:geminifam1} for an illustration of $16$ non-isomorphic graphs
with identical square.
\begin{figure}[h]
\begin{center}
\includegraphics[scale=.4]{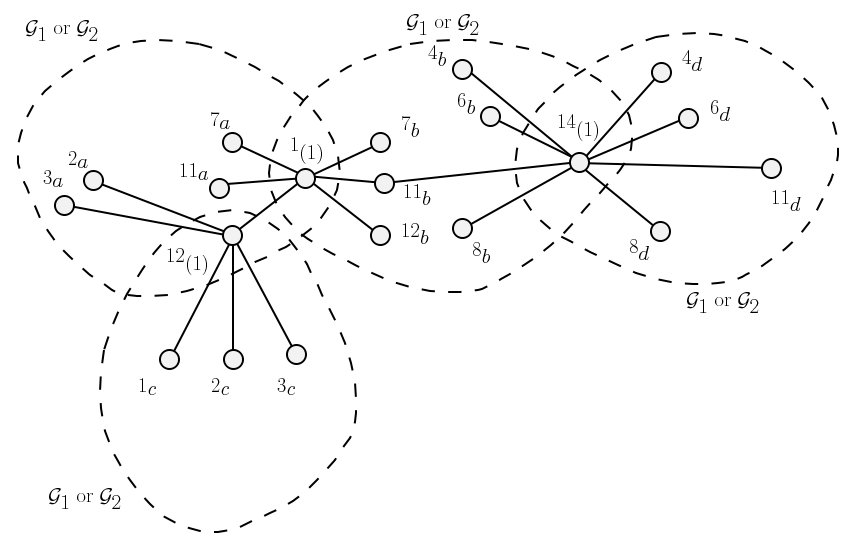}
\end{center}
\caption{Non-isomorphic graphs with identical square.}
\label{fig:geminifam1}
\end{figure}

\par This process is introducing a family of graphs 
with exponentially many non-isomorphic square roots.
This family indicates that even with the restriction $\delta_{H} \geq 2$ knowledge of
any local neighbourhood is not sufficient to reconstruct the rest of the square root graph.
\par When a square root graph has no short cycle (girth of at least $6$)
square root finding problem is solvable
by an efficient algorithm \cite{Far2009}.
The main idea of this algorithm (and almost
all attempts to find an efficient algorithm for square root finding problem)
is to use a known neighbourhood
of the square root graph and reconstruct the
whole square root graph by only using
informations from the square graph. Indeed if we know
an arbitrary neighbourhood of graph $H$ of girth at least six,
where $H^2=G$, then we can
recognize second neighbours (vertices of distance two) of that vertex.
In this way the whole graph $H$ can be uniquely reconstructed
with only using information of $G$.
The family of graphs we introduced using $\mathcal{G}_1$ and $\mathcal{G}_2$
indicates that by knowing an arbitrary neighbourhood of the square root graph
we can never decide the rest of the graph, as there are always options 
(to decide a second neighbourhood of a vertex)
that results different (non-isomorphic) graphs. Hence knowing a constant number of
neighbourhoods in the square root graph can not help to find a square root
for a given graph (or to decide if there exists a square root graph).
\par We also use $\mathcal{G}_1$ and $\mathcal{G}_2$
graphs as part of our reduction in Section~\ref{sec:girthfive}.
We need to show that the graph $\mathfrak{G}={\mathcal{G}_{1}}^2={\mathcal{G}_{2}}^2$
has only two non-isomorphic square roots which are $\mathcal{G}_1$ and $\mathcal{G}_2$.
For the rest of this paper we use $\mathfrak{G}$ as the square of ${\mathcal{G}_{1}}$ (or ${\mathcal{G}_{2}})$.
\begin{theorem}
Let $\mathfrak{G}=H^2$ for $g(H)=5$, then $H$ is either isomorphic to $\mathcal{G}_1$ or to $\mathcal{G}_2$.
\label{lemma:uniquepairofgeminis}
\end{theorem}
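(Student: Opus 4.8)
The plan is to work directly with the fixed graph $\mathfrak{G}$ (which has $16$ vertices) and show that any girth-$5$ square root $H$ is forced, up to isomorphism, to be one of $\mathcal{G}_1$ or $\mathcal{G}_2$. Since $\mathfrak{G}$ is a concrete small graph, the argument is essentially a careful finite case analysis, but it should be organized around structural invariants rather than brute force. First I would record the basic data of $\mathfrak{G}$: its degree sequence, the sizes and overlaps of its maximal cliques, and in particular the induced subgraph on each closed neighbourhood $N_{\mathfrak{G}}[v]$. The key observation driving everything is that if $G = H^2$, then for each vertex $v$ the set $N_H[v]$ is a clique in $G$, so the maximal cliques of $H$-closed-neighbourhoods sit inside maximal cliques of $G$; conversely every edge $xy$ of $G$ is ``witnessed'' either by an edge of $H$ or by a common neighbour in $H$. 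So the proof reduces to: which partitions of $E(\mathfrak{G})$ into ``$H$-edges'' versus ``distance-two pairs'' are realizable by a girth-$5$ graph?

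The main steps, in order, would be: (1) Identify the vertices of $\mathfrak{G}$ that must have degree $2$ in $H$ and those that must have higher degree, using the fact that $g(H) = 5$ forbids triangles and $4$-cycles in $H$, which tightly constrains how a clique of $G$ can arise (a clique $K_m$ in $G$ with $m \ge 4$ coming from a triangle-free $H$ must essentially be a star $K_{1,m-1}$ in $H$, since $C_5$ only gives $K_5$ and nothing larger). This pins down the ``centers'' of the star-cliques in $H$. (2) Locate in $\mathfrak{G}$ the unique (or nearly unique) configuration isomorphic to $K_5$ or to the square of a $5$-cycle; the graphs $\mathcal{G}_1,\mathcal{G}_2$ differ precisely in a local $K_5$-versus-$C_5$ choice, so I expect $\mathfrak{G}$ to contain exactly one such ``switchable'' spot, and everything outside it to be rigidly a $K_{1,n}$-structure forced by the girth-$6$-style uniqueness reasoning of \cite{Far2009, Adam2011}. (3) Show that away from the switchable spot, the reconstruction argument for girth $\ge 6$ applies verbatim — second neighbourhoods are determined — so $H$ agrees with both $\mathcal{G}_1$ and $\mathcal{G}_2$ there. (4) At the switchable spot, exhaust the (finitely many, and by girth $5$ very few) ways to resolve the local ambiguity and check that each either fails the girth condition, fails to reproduce $\mathfrak{G}$, or yields $\mathcal{G}_1$ or $\mathcal{G}_2$.

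The hard part will be step (2) together with the bookkeeping in step (4): making sure that the ``switchable'' local structure really is unique in $\mathfrak{G}$ and that no exotic girth-$5$ root exists that mixes the two resolutions in an unexpected way, or that uses a $5$-cycle in $H$ at some other location whose square happens to coincide with a clique of $\mathfrak{G}$. To control this I would lean on the observation that a $5$-cycle in $H$ produces a $K_5$ in $G$, so I only need to find all $K_5$ subgraphs of $\mathfrak{G}$ and, for each, test whether replacing the presumed star-root by a $5$-cycle on those five vertices is consistent with the rest of $\mathfrak{G}$; I expect exactly one $K_5$ survives this test, giving the clean dichotomy. The remainder — verifying that the forced star-structure, once its centers are fixed, has squared graph equal to $\mathfrak{G}$ minus that clique — is routine adjacency-checking against the explicit figure.

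Since $\mathfrak{G}$ is completely explicit, an alternative and perhaps cleaner route for steps (3)–(4) is simply to invoke the uniqueness results already available: by \cite{Adam2011}, a girth-$\ge 6$ root would be unique, and $\mathfrak{G}$ has none (one checks it has a $5$-cycle-induced $K_5$ that no girth-$6$ graph can produce), so $g(H) = 5$ exactly; then the triangle-free, $C_4$-free constraints plus the degree analysis of step (1) leave only finitely many candidates, which one enumerates. I would present the argument in the former, more structural form if space permits, and fall back on the enumeration otherwise.
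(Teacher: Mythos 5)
Your overall plan (exploit girth $5$ to classify how cliques of $\mathfrak{G}$ can arise, find the one ``switchable'' spot, and rigidly reconstruct the rest) is a reasonable sketch, but two of its load-bearing claims do not hold as stated, so there is a genuine gap. First, the assertion in step (1) that a clique $K_m$ in $G$ with $m\ge 4$ arising from a triangle-free $H$ ``must essentially be a star, since $C_5$ only gives $K_5$ and nothing larger'' is false: the Petersen graph has girth $5$ and its square is $K_{10}$ (this very fact is used in the paper's clause gadget), and more importantly a clique of $G$ need not be generated by any single connected structure of $H$ at all --- two vertices of the clique may be joined through a common $H$-neighbour lying \emph{outside} the clique. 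So maximal cliques of $\mathfrak{G}$ do not directly ``pin down the centers of the star-cliques,'' and step (2)'s search over $K_5$ subgraphs does not exhaust the ways a girth-$5$ root can realize $\mathfrak{G}$. Second, step (3) --- that ``away from the switchable spot the girth-$\ge 6$ reconstruction applies verbatim'' --- is exactly what needs to be proved and cannot be imported: the girth-$6$ argument of \cite{Far2009,Adam2011} uses the absence of $5$-cycles globally, and Section~2 of this paper exists precisely to show that for girth-$5$ roots local neighbourhood knowledge does \emph{not} in general determine the rest of the graph. Your fallback of invoking \cite{Adam2011} also misfires: $K_5$ \emph{is} the square of a girth-$\infty$ graph ($K_{1,4}$), so the presence of a $K_5$ in $\mathfrak{G}$ does not by itself rule out large-girth roots, and in any case the theorem already hypothesizes $g(H)=5$.

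What the paper actually does is more elementary and avoids both issues. It first proves (Lemma~\ref{lemma:uniqueneighbourhoodof1}) that $N_H(1)=\{7,11,12\}$ is forced, by assuming $1\overset{H}{\nsim}v$ for each $v\in N_{\mathfrak{G}}(1)$ in turn, deducing $N_H(1)\subseteq N_{\mathfrak{G}}(1)\cap N_{\mathfrak{G}}(v)$, and reaching a contradiction with some other required $\mathfrak{G}$-adjacency. It then propagates: for $x\in N_{\mathfrak{G}}(v)\setminus N_H(v)$ the girth-$5$ condition (no $C_4$) forces a \emph{unique} witness $u\in N_{\mathfrak{G}}(x)\cap N_H(v)$ with $u\overset{H}{\sim}x$, and following these forced choices from vertex $1$ outward determines every neighbourhood except for a single binary branch, $L_1(5)=L_1(13)=\{7,11\}$; the two resolutions yield exactly $\mathcal{G}_1$ and $\mathcal{G}_2$. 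If you want to salvage your approach, you would need to replace the clique-classification step with this kind of forced-neighbourhood propagation, or else rigorously enumerate all $H$-realizations of each maximal clique of $\mathfrak{G}$ including those using external witnesses.
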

A proof of this theorem can be found in Appendix-A.
\section{Square of graphs with girth five}
\label{sec:girthfive}
In this section we show that the following problem is NP-complete.\\

\begin{center}
\textsc{Square of Graphs With Girth Five}\\
\begin{tabular}{rp{10cm}}
\textit{Instance}& A graph $G$.\\
\textit{Question:}& Does there exists a graph $H$ with girth at least $5$ such that $G=H^2$?
\end{tabular}
\end{center}
It is an easy observation that \textsc{Square of Graphs With Girth Five}
is in $NP$. We will reduce a variation of
the ``\textit{positive 1-in-3 SAT}'' problem (which is an NP-complete problem \cite{SCHAEFER1978}) to
\textsc{Square of Graphs With Girth Five}.
Positive 1-in-3 SAT is a variant of the 3-satisfiability problem (3SAT).
Like 3SAT, the input instance is a collection of clauses, where each
clause is the disjunction of exactly three literals, and each literal is just a variable
(there are no negations, which is why it is called positive).
The positive 1-in-3 3SAT problem is to determine whether
there exists a truth assignment to the variables so that each clause
has exactly one true variable (and thus exactly two false variables).
In this paper we are interested in another variation
of the positive 1-in-3 SAT,  which we call it
\textit{POSITIVE AND
MINIMUM INTERSECTING 1-in-3 SAT}.

\begin{center}
\textsc{Positive and Minimum Interesting 1-in-3 SAT}.
\begin{tabular}{rp{10cm}}
\textit{Instance:}& A collection of clauses, where each
clause is the disjunction of exactly three variables
and two different clauses are sharing at most
one variable.\\
\textit{Question:}& Does there exists a truth assignment to
the variables so that each clause has exactly one true variable?
\end{tabular}
\end{center}

\begin{theorem}
\textsc{Positive and Minimum Interesting 1-in-3 SAT} is NP-complete.
\end{theorem}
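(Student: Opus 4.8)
The plan is to reduce from \textsc{Positive 1-in-3 SAT}, which is known to be NP-complete \cite{SCHAEFER1978}, to \textsc{Positive and Minimum Interesting 1-in-3 SAT}. Membership in NP is immediate: given a truth assignment, one checks in polynomial time that each clause has exactly one true variable. So the entire content of the theorem is the hardness reduction, and the only thing we really gain over ordinary \textsc{Positive 1-in-3 SAT} is the structural guarantee that every pair of distinct clauses shares at most one variable.

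First I would analyse how a given instance $\Phi$ of \textsc{Positive 1-in-3 SAT} can fail the ``minimum intersecting'' condition: two clauses $C_i = (x \vee y \vee z)$ and $C_j = (x \vee y \vee w)$ (or even $C_i = C_j$ as multisets) can share two or three variables. The standard fix is variable \emph{splitting}: for each offending shared variable, introduce a fresh copy and force the copy to always take the same truth value as the original by means of a small equality gadget built only from fresh variables. The key subtlety is that the equality gadget must itself be expressible with 1-in-3 clauses \emph{and} must not reintroduce a 2-variable overlap anywhere. A clean way to encode ``$a$ and $b$ have the same truth value'' using 1-in-3 clauses is to introduce fresh auxiliary variables $p, q, r, s$ and clauses such as $(a \vee p \vee q)$, $(b \vee p \vee r)$, $(q \vee r \vee s)$, arranged so that the only way to satisfy all of them in 1-in-3 fashion forces $a = b$, while each new variable appears in only a controlled, bounded number of clauses; one then checks that these gadget clauses pairwise overlap in at most one variable and overlap the original clauses in at most one variable (the single shared endpoint). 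I would carry this out by replacing, in $C_j$, the shared occurrences of $y$ (and $z$, if also shared) by brand-new variables $y'$, $z'$, attaching an equality gadget linking $y$ to $y'$ and $z$ to $z'$, and repeating over all pairs of clauses; since each clause has only three variables, each clause participates in only finitely many conflicts and the blow-up is polynomial.

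Next I would verify correctness in the two usual directions. If $\Phi$ is satisfiable by an assignment $\alpha$, extend $\alpha$ to the new variables by copying truth values through each equality gadget and choosing the internal gadget variables as the gadget's satisfiability forces; one checks each original clause still has exactly one true literal (its variables were either untouched or replaced by equal-valued copies) and each gadget clause is 1-in-3 satisfied by construction. Conversely, any 1-in-3 satisfying assignment of the reduced instance, restricted to the original variables, satisfies $\Phi$, because the gadgets enforce $y = y'$ etc., so a true literal in the modified $C_j$ corresponds to a true literal of the original $C_j$ and the ``exactly one'' count is preserved. Finally I would confirm the output instance genuinely satisfies the minimum-intersecting property: after all splittings, no two distinct clauses share more than one variable, by construction the only repeated variables across clauses are the deliberate single-variable links.

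The main obstacle is designing the equality gadget so that it simultaneously (i) is realizable with \textsc{Positive 1-in-3 SAT} clauses — which is delicate because 1-in-3 is a rather rigid constraint and naive ``$a = b$'' encodings tend to want negations or a 2-in-3 flavour — and (ii) does not sneak in a fresh 2-variable overlap, neither among its own clauses nor with the surrounding clauses. I expect to spend most of the effort pinning down a small explicit gadget (likely on the order of four to six fresh variables and three to five clauses per link) and then doing the bookkeeping that every pair of clauses in the final formula meets in at most one variable. Everything else — NP membership, the forward and backward correctness arguments, and the polynomial bound on the size of the reduction — is routine once the gadget is fixed.
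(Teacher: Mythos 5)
Your overall strategy (reduce from \textsc{Positive 1-in-3 SAT}, repair the pairs of clauses that share two variables) is the right starting point, but you go in the wrong direction at the crucial step, and the gap is real. You propose to \emph{split} a shared variable into two copies and then force the copies equal with an equality gadget built from positive 1-in-3 clauses. That gadget is the entire difficulty, you leave it unspecified, and the candidate you sketch does not work: for the clauses $(a \vee p \vee q)$, $(b \vee p \vee r)$, $(q \vee r \vee s)$, the assignment $a=T$, $b=F$, $r=T$, $p=q=s=F$ satisfies all three clauses in the 1-in-3 sense, so $a=b$ is not enforced. Worse, the natural positive 1-in-3 equality gadget, namely the pair $(a \vee u \vee v)$, $(b \vee u \vee v)$, does force $a=b$ but its two clauses share the two variables $u,v$, i.e.\ it violates exactly the property you are trying to establish; building an equality gadget that is simultaneously correct and pairwise-overlap-free is not addressed in your proposal, and it is not at all clear that a small such gadget exists.

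The paper avoids this entirely by going in the opposite direction: if $c=(x\vee y\vee z)$ and $d=(x\vee y\vee u)$ share the two variables $x,y$, then in \emph{any} 1-in-3 satisfying assignment exactly one of $x,y,z$ and exactly one of $x,y,u$ is true, which already forces $z$ and $u$ to have the same truth value. So one simply \emph{identifies} $u$ with $z$ and deletes the now-redundant clause $d$; no gadget is needed, satisfiability is preserved in both directions, and iterating this (each step removes a clause, so it terminates in polynomial time) yields an equivalent instance in which no two clauses share more than one variable. The missing idea in your write-up is precisely this observation that the offending overlap is self-correcting --- it semantically entails the equality you are trying to engineer --- so merging, rather than splitting plus a gadget, is the way through.
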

\begin{proof}
It is trivial that this problem is in NP.
We reduce an instance of a \textit{Positive 1-in-3 SAT} to a
\textsc{Positive and Minimum Interesting 1-in-3 SAT}.
Let $\phi$ be a given collection of clauses as an instance of the
positive 1-in-3 SAT.

For each pair of clauses $c: (x \vee y \vee  z)$ and
$d:(x\vee  y\vee  u)$ in $\phi$, that are sharing two
variables $x$ and $y$, we know $u$ and $v$ must have the same truth value.
So we may identify the two variables and thus replace $v$
with $u$ and remove the clause $d$.
We construct $\phi ^\prime$ from $\phi$ by removing one
of clauses in each pair of clauses that are sharing two variable.
Therefore $\phi ^\prime$ is an instance of 
\textsc{Positive and Minimum Interesting 1-in-3 SAT}.
This reduction shows that
\textsc{Positive and Minimum Interesting 1-in-3 SAT}
is NP-complete.
\end{proof}

In this section we reduce the \textsc{Positive and Minimum Interesting 1-in-3 SAT} to
\textsc{Square of Graphs With Girth Five}.
\subsection{The Reduction}
Before introducing the reduction in all details, we present three main ideas
of the graph construction that we will explain below. For convenience,
we represent $\forall a\in A: v \sim a$ by $v \sim A$, and also $\{x_a,y_a,z_a,\ldots\}$
by $\{x,y,z,\ldots\}_{a}$.

\par First is the idea of using graph $\mathfrak{G}$ to represent each copy of a variable.
As we proved in Appendix A, a square root of $\mathfrak{G}$ is a graph which is isomorphic to
either $\mathcal{G}_1$ or $\mathcal{G}_2$.

We set $\mathcal{G}_1$ to represent the FALSE value and 
$\mathcal{G}_2$ to represent the TRUE value. 
If the square root of the subgraph that is representing a copy of a variable $x$ 
is isomorphic to $\mathcal{G}_1$ we conclude that $x$ is $FALSE$.
Otherwise, that is if it is isomorphic to $\mathcal{G}_2$, we conclude that $x$ is $TRUE$.
\par The second idea is to represent a clause $c_i = x_i \vee y_i \vee z_i$
in such a way that exactly one of $x_i, y_i$ and $z_i$
is true (i.e., exactly one of the subgraphs that are representing 
the three variables is isomorphic to $\mathcal{G}_2$ and the other two are isomorphic
to $\mathcal{G}_1$). For this, for each clause $c_i$ we introduce four new vertices
$y_{1}^{i},\ldots,y_{4}^{i}$ to construct a Petersen graph in the square root
(that is a $K_{10}$ in the square graph) using vertices
$5$ and $13$ in the three subgraphs representing the copies of variables in $c_i$.
This construction is illustrated in Figure~\ref{fig:clause}.
\begin{figure}
\begin{center}
\includegraphics[scale=.42]{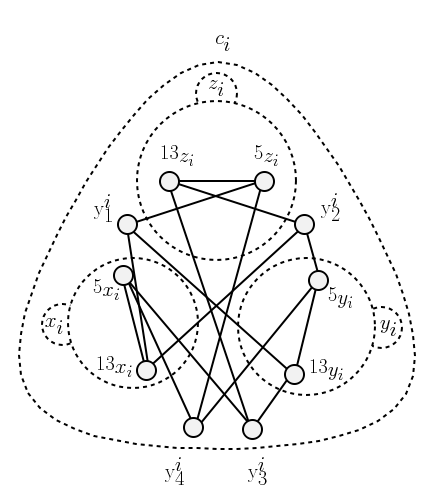}
\caption{Structure of a subgraph of the square root graph which represents a clause.}
\label{fig:clause}
\end{center}
\end{figure}

\begin{lemma}
\label{lem:clause}
The square of the graph shown in Figure~\ref{fig:clause}
has three different (up to labelling) square roots.
The other two square roots can be obtained by switching
$\mathcal{G}_1$s with $\mathcal{G}_2$s.
However, it has a unique square root of
girth $5$ up to isomorphism.
\end{lemma}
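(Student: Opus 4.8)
The plan is to analyze the structure depicted in Figure~\ref{fig:clause} in three stages. First, I would identify the "rigid skeleton" of the clause gadget: the four new vertices $y_1^i,\ldots,y_4^i$ together with the six distinguished vertices (the copies of $5$ and $13$ from the three variable-subgraphs) that are assembled into a Petersen graph $P$ in the square root, so that in $G=H^2$ these ten vertices form a $K_{10}$. The key input here is Theorem~\ref{lemma:uniquepairofgeminis}: the only square roots of girth $5$ of the subgraph $\mathfrak{G}$ attached at each variable slot are $\mathcal{G}_1$ and $\mathcal{G}_2$, and these differ only by how the "boundary" vertices $5$ and $13$ sit relative to the rest of the gemini. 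So any square root of the whole gadget's square decomposes as: a choice of $\mathcal{G}_1$ or $\mathcal{G}_2$ for each of the three variable-subgraphs, glued onto a square root of the $K_{10}$ part via the vertices $5,13$.

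Second, I would pin down which square roots of the clause $K_{10}$ are compatible with the rest of the graph. A $K_{10}$ can have many square roots in general, but here the relevant one is forced to be the Petersen graph: the four vertices $y_1^i,\ldots,y_4^i$ have degree $1$ outside the gadget—actually all their $G$-neighbours lie in the $K_{10}$—so the adjacencies of the remaining vertices of $G$ (coming from the gemini attachments and from the bound "two clauses share at most one variable") constrain the square root of $K_{10}$ to be $3$-regular of girth $5$ on $10$ vertices, which is uniquely the Petersen graph. I expect this forcing argument—showing that no other square root of $K_{10}$ can be extended consistently to girth-$5$ roots of the neighbouring $\mathfrak{G}$'s—to be the main obstacle, since it requires a careful case check that the "cross edges" between a gemini boundary vertex and the $y_j^i$'s cannot be realized except in the Petersen configuration, and that any degenerate root (e.g. a star root of $K_{10}$, or one creating a triangle or $4$-cycle through vertices $5$ or $13$) violates girth $5$.

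Third, once the Petersen skeleton is forced, the only remaining freedom is the independent choice of $\mathcal{G}_1$ versus $\mathcal{G}_2$ at each variable slot. This is where the "exactly one true" semantics enters: in the Petersen graph, vertices $5$ and $13$ of a single gemini are at some fixed distance, and the distance-two relation inside $P$ together with the extra structure of $\mathcal{G}_2$ (versus $\mathcal{G}_1$) forces exactly one of the three slots to be $\mathcal{G}_2$—otherwise a forbidden short cycle appears or a required $G$-edge is missing. This yields that, up to relabelling (i.e. permuting which of the three slots is the $\mathcal{G}_2$), there are exactly three square roots, all obtained from one another by switching $\mathcal{G}_1$'s with $\mathcal{G}_2$'s; and since these three are pairwise isomorphic (the gadget is symmetric under permuting its three variable slots), there is a unique square root of girth $5$ up to isomorphism. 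I would present the count first (three labelled roots), then invoke the slot-permuting automorphism to collapse them to one isomorphism class, closing the proof.
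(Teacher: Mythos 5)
Your proposal follows essentially the same route as the paper: invoke Theorem~\ref{lemma:uniquepairofgeminis} to force each variable block to be $\mathcal{G}_1$ or $\mathcal{G}_2$, use the swap $N_{\mathcal{G}_1}[5]=N_{\mathcal{G}_2}[13]$ and $N_{\mathcal{G}_1}[13]=N_{\mathcal{G}_2}[5]$ to force how the $y_1^i,\ldots,y_4^i$ attach to the six spoke vertices, deduce that exactly one block can be $\mathcal{G}_2$, and collapse the three labelled roots into a single isomorphism class. The only substantive difference is cosmetic---you pin down the $K_{10}$-core as the unique $3$-regular girth-$5$ graph on ten vertices, whereas the paper argues directly from the forced adjacencies (and your parenthetical claim that all $G$-neighbours of the $y_j^i$ lie inside the $K_{10}$ is false as stated: each $y_j^i$ also sees the $\{11,15\}$ and $\{6,7,16\}$ vertices of the variable gadgets in $G$, and these extra neighbours are precisely what drives the forcing).
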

\begin{proof}
Let $\mathfrak{C}$ be the square of
the graph shown in Figure~\ref{fig:clause} on $X \cup Y \cup Z \cup I $ where
$X=\{1,2,\ldots,16\}_{x_i}, Y=\{1,2,\ldots,16\}_{y_i}, Z=\{1,2,\ldots,16\}_{z_i}$
and $I=\{y_{1}^{i},\ldots, y_{4}^{i}\}$.
Also let $D$ to be a square root of $\mathfrak{C}$.
Graphs constructed by switching
$\mathcal{G}_1$s with $\mathcal{G}_2$s.
The isomorphism of these three graphs can be obtained by a permutation on $I$.

For example, assume that $D[X] \cong D[Y] \cong \mathcal{G}_1$
and $D[Z] \cong \mathcal{G}_2$. Then the graph obtained by the permutation
$y_{1}^{i} \leftrightarrow y_{3}^{i}$ and $y_{2}^{i} \leftrightarrow y_{4}^{i}$
has the same square as the graph shown in Figure~\ref{fig:clause}.\\
By Theorem~\ref{lemma:uniquepairofgeminis},
the square root of the subgraph induced by $X, Y$ or $Z$
is either $\mathcal{G}_1$ or $\mathcal{G}_2$.
Now consider the neighbourhoods of vertices $5$ and $13$.
We have $N_{\mathcal{G}_1}[5]=N_{\mathcal{G}_2}[13]=\{5,13,11,15\}$
and $N_{\mathcal{G}_1}[13]=N_{\mathcal{G}_2}[5]=\{5,13,6,7,16\}$. 
It can be seen that if none or more than one of
the square roots of the subgraph induced by $X, Y$ or $Z$ is isomorphic to $\mathcal{G}_2$,
then there would be no permutation on $I$, that form
the same square as the graph shown in Figure~\ref{fig:clause}.
\end{proof}

\par The third idea is to make sure that different copies
of the same variable have the same truth value.
Again we use the fact that 
$N_{\mathcal{G}_1}[5]=N_{\mathcal{G}_2}[13]=\{5,13,11,15\}$
and $N_{\mathcal{G}_1}[13]=N_{\mathcal{G}_2}[5]=\{5,13,6,7,16\}$. 
Let $x_i$ and $x_j$ be two copies of the same variable in two different clauses $c_i$
and $c_j$.
We introduce two new vertices
called $v_{x_i, x_j}$ and $w_{x_i, x_j}$ which form a 
$C_6$ in the square root graph together with
the vertices $5$ and $13$ in the subgraphs corresponding to $x_i$ and $x_j$.
If both $x_i$ and $x_j$ are TRUE then $v_{x_i, x_j} \sim \{13_{x_i},13_{x_j}\}$
and $w_{x_i, x_j} \sim \{5_{x_i},5_{x_j}\}$,
otherwise $w_{x_i, x_j} \sim \{13_{x_i},13_{x_j}\}$
and $v_{x_i, x_j} \sim \{5_{x_i},5_{x_j}\}$.
This construction is shown in Figure~\ref{fig:copiesofthesamevariable}.
Moreover we have the following Lemma.

\begin{figure}
\begin{center}
\includegraphics[scale=.38]{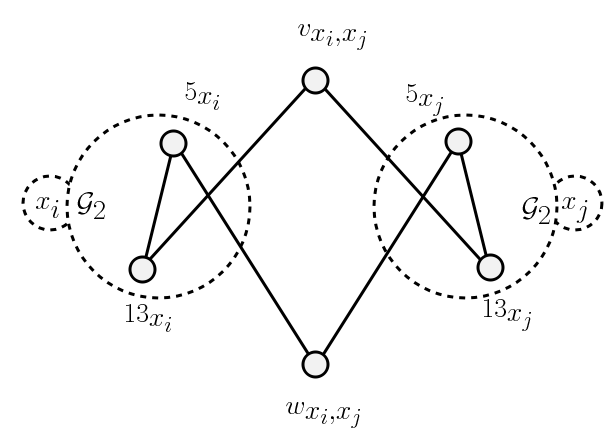}
\caption{Different copies of a variable have the same truth value.}
\label{fig:copiesofthesamevariable}
\end{center}
\end{figure}

\begin{lemma}
\label{lem:copiesofavariable}
Let $\mathfrak{X}$ be the square of the graph shown in Figure~\ref{fig:copiesofthesamevariable}
on the vertex set of $X^i\cup X^j \cup \{v_{x_i, x_j},w_{x_i, x_j}\}$
where $X^i=\{1,2,\ldots,16\}_{x_i}$ and $X^j=\{1,2,\ldots,16\}_{x_j}$.
Let $\mathsf{X}^2=\mathfrak{X}.$
If $\mathsf{X}[X^i] \cong\mathcal{G}_1$ (or $\mathcal{G}_2$) then 
$\mathsf{X}[X^j] \cong \mathcal{G}_1$ (or $\mathcal{G}_2$).
\end{lemma}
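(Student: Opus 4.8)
The plan is to mimic the structure of the proof of Lemma~\ref{lem:clause}, but now the constraint comes from a $C_6$ rather than a Petersen graph. First I would set up notation: let $\mathsf{X}$ be any square root of $\mathfrak{X}$ with $g(\mathsf{X})\ge 5$, and apply Theorem~\ref{lemma:uniquepairofgeminis} to the subgraphs $\mathsf{X}[X^i]$ and $\mathsf{X}[X^j]$, each of which is a square root (of girth five) of a copy of $\mathfrak{G}$, hence isomorphic to $\mathcal{G}_1$ or $\mathcal{G}_2$. Wait — I need to first check that $\mathsf{X}[X^i]$ really is an induced square root of $\mathfrak{G}$; this requires verifying that no vertex of $X^j\cup\{v_{x_i,x_j},w_{x_i,x_j}\}$ can create a ``shortcut'' distance-$\le 2$ edge inside $X^i$ in the square, i.e. that the adjacencies of $\mathfrak{X}$ restricted to $X^i$ are exactly those of $\mathfrak{G}$. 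This follows from the construction: the only vertices of $X^i$ touched by the new gadget are $5_{x_i}$ and $13_{x_i}$, and $v,w$ are each adjacent (in $\mathsf{X}$) to exactly one of them, so two vertices of $X^i$ at distance $3$ or more in $\mathcal{G}_j$ stay at distance $\ge 3$ in $\mathsf{X}$.

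The heart of the argument is then a local case analysis at the four vertices $5_{x_i},13_{x_i},5_{x_j},13_{x_j}$ together with $v:=v_{x_i,x_j}$ and $w:=w_{x_i,x_j}$. The key recorded fact is $N_{\mathcal{G}_1}[5]=N_{\mathcal{G}_2}[13]=\{5,13,11,15\}$ and $N_{\mathcal{G}_1}[13]=N_{\mathcal{G}_2}[5]=\{5,13,6,7,16\}$, which says: in the square, the vertex that plays the role of ``$5$'' in a $\mathcal{G}_1$-block has the \emph{same} closed neighbourhood (within its own block) as the vertex playing the role of ``$13$'' in a $\mathcal{G}_2$-block, and vice versa. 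I would argue that in $\mathsf{X}$ the path $5_{x_i}\,\text{--}\,13_{x_i}\,\text{--}\,(\text{a new vertex})\,\text{--}\,\ldots\,\text{--}\,5_{x_j}\,\text{--}\,13_{x_j}$ must close up into the prescribed $C_6$: because $\mathfrak{X}$ contains no $K_{10}$ or other large clique on these vertices (unlike the clause gadget), and because $v,w$ have degree forcing them onto exactly the two ``contact'' vertices $\{5,13\}$ of each block, the only way to realize all the square-edges among $\{5_{x_i},13_{x_i},5_{x_j},13_{x_j},v,w\}$ while keeping girth $\ge 5$ is the hexagon $5_{x_i}\text{--}v\text{--}5_{x_j}\text{--}13_{x_j}\text{--}w\text{--}13_{x_i}\text{--}5_{x_i}$ or its mirror. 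Then reading off which of $5,13$ is the ``high-degree within the block, adjacent to $\{6,7,16\}$'' vertex pins down whether each block is $\mathcal{G}_1$ or $\mathcal{G}_2$, and the two possible hexagon orientations correspond precisely to (both $\mathcal{G}_1$) or (both $\mathcal{G}_2$) — never a mixed pair, since a mixed pair would force, say, $v\sim 13_{x_i}$ and $v\sim 5_{x_j}$ where $5_{x_j}$ in a $\mathcal{G}_1$-block and $13_{x_i}$ in a $\mathcal{G}_2$-block have overlapping neighbourhoods inside their blocks with $v$, creating a short cycle through $11$ or $15$.

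Concretely the steps are: (1) show each of $\mathsf{X}[X^i],\mathsf{X}[X^j]$ is $\cong\mathcal{G}_1$ or $\mathcal{G}_2$ via Theorem~\ref{lemma:uniquepairofgeminis}; (2) identify, inside each block, the image of the vertices labelled $5$ and $13$ by their degrees / neighbourhood patterns, and note the pairing between the two possible labellings; (3) determine the edges of $\mathsf{X}$ incident to $v$ and $w$ — they each have two neighbours among $\{5,13\}_{x_i}\cup\{5,13\}_{x_j}$, one in each block, and cannot be adjacent to each other or to anything else without violating girth $5$ or changing the square; (4) enumerate the ways to choose these neighbours and discard every choice that either fails to produce $\mathfrak{X}$ as its square or creates a cycle of length $3$ or $4$; (5) observe that the surviving configurations all have $\mathsf{X}[X^i]$ and $\mathsf{X}[X^j]$ of the \emph{same} type, which is the claim. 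The main obstacle I expect is step~(4): being careful enough that every square-edge of $\mathfrak{X}$ among the six special vertices (and their neighbours reaching distance two via $11,15,6,7,16$ inside the blocks) is accounted for, so that the case elimination is genuinely exhaustive rather than merely plausible — this is exactly the kind of bookkeeping where the ``same neighbourhood'' coincidences of $\mathcal{G}_1$ and $\mathcal{G}_2$ at vertices $5,13$ do the real work and must be invoked precisely.
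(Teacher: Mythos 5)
Your proposal follows essentially the same route as the paper: the paper's proof is exactly the kernel of your step (4), namely that a mixed pair would force $v_{x_i,x_j}$ to be adjacent in $\mathsf{X}$ to $5_{x_i}$ and $13_{x_j}$ (via the identity $N_{\mathcal{G}_1}[5]=N_{\mathcal{G}_2}[13]$), which would put the edge $5_{x_i}13_{x_j}$ into $\mathsf{X}^2$ even though $5_{x_i}\overset{\mathfrak{X}}{\nsim}13_{x_j}$. One small correction: the decisive obstruction in the mixed case is this missing cross-edge of $\mathfrak{X}$, not a ``short cycle through $11$ or $15$'' --- the two blocks share no vertices, so no short cycle arises there; it is your other discard criterion, ``fails to produce $\mathfrak{X}$ as its square,'' that does the work.
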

\begin{proof}
Assume otherwise and let (without loss of generality)
$\mathsf{X}[X^i] \cong\mathcal{G}_1$ while
$\mathsf{X}[X^j] \cong \mathcal{G}_2$, hence $v_{x_i, x_j}$
must be adjacent to $5_{x_i}$ and $13_{x_j}$ which means
$5_{x_i} \overset{\mathfrak{X}}{\sim} 13_{x_j}$, and this is a contradiction
as $5_{x_i} \overset{\mathfrak{X}}{\nsim} 13_{x_j}$.
\end{proof}

\par{\bf \textit{Reduction Graph:}} Let $\phi: (c_1 \wedge c_2 \wedge \cdots \wedge c_n)$ be an instance of
\textsc{Positive and Minimum Intersecting 1-in-3 SAT}
such that $c_i= x_{i} \vee y_i \vee z_i$.
As a convention we use $x_i$ and $x_j$ to represents two copies of variable $x$
in distinct clauses $c_i$ and $c_j$.

\par We construct an instance $G = G(\phi)$ and we show that there exists a square
root $H$ of of girth $5$ of graph $G$ corresponds to a satisfying assignment of $\phi$.
\par The vertex set of graph $G(\phi)$ consists of:
	\begin{itemize}
	\item For every copy $x_i$ of variable $x$, $V_{x_i}=V_{\mathcal{G}_1}(=V_{\mathcal{G}_2})=\{1,2,\ldots, 16\}_{x_i}$,
	representing $16$ vertices of a graph $\mathfrak{G}$.
	
	\item For each clause $c_i$, $V_i=\{y_{1}^{i},y_{2}^{i},y_{3}^{i},y_{4}^{i}\}$.
	
	\item $W_{x_i,x_j}=\{v_{x_i,x_j},w_{x_i,x_j}\}$, corresponding
	to two copies $x_i$ and $x_j$ of the same variable $x$,
	in two distinct clauses $c_i$ and $c_j$.
	\end{itemize}
	The edge set of $G(\phi)$ consists of:
	\begin{itemize}
	\item Variable edges: for each $x_i$, $G[V_{x_i}]=\mathfrak{G}$.
	\item Clause edges: For each clause $c_i=x_i \vee y_i \vee z_i$\\
	$G[\{5_{x_i},13_{x_i},5_{y_i},13_{y_i},5_{z_i},13_{z_i},y_{1}^{i},
	y_{2}^{i},y_{3}^{i},y_{4}^{i}\}] \cong K_{10}$, i.e., they are all adjacent to each other. Also by recalling that $N_{\mathcal{G}_1}[5]=N_{\mathcal{G}_2}[13]=\{5,13,11,15\}$
and $N_{\mathcal{G}_1}[13]=N_{\mathcal{G}_2}[5]=\{5,13,6,7,16\}$,
	we have:\\
	$y_{1}^{i} \sim \{11_{x_i},15_{x_i}, 6_{y_i},7_{y_i},16_{y_i}, 11_{z_i},15_{z_i}\}$,\\
	$y_{2}^{i} \sim \{6_{x_i},7_{x_i},16_{x_i}, 11_{y_i},15_{y_i}, 11_{z_i},15_{z_i}\}$,\\
	$y_{3}^{i} \sim \{6_{x_i},7_{x_i},16_{x_i}, 6_{y_i},7_{y_i},16_{y_i}, 6_{z_i},7_{z_i},16_{z_i}\}$,\\
	$y_{4}^{i} \sim \{11_{x_i},15_{x_i}, 6_{y_i},7_{y_i},16_{y_i}, 6_{z_i},7_{z_i},16_{z_i}\}$, see~Figure~\ref{CI}.\\
	\begin{figure}
	\begin{center}
	\includegraphics[scale=.35]{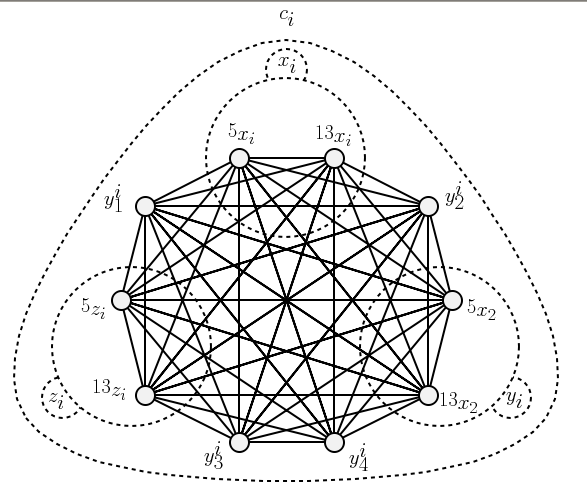}
	\caption{A subgraph of $G(\phi)$ corresponding to a clause.}
	\label{CI}
	\end{center}
	\end{figure}
	\item Intra clause edges: for each clause $c_i=x_i \vee y_i \vee z_i$ where $i \notin \{j,k,m\}$:\\
	$y_{1}^{i} \sim \{v_{x_i,x_j},w_{y_i,y_k}, v_{z_i,z_m}\}$,\\
	$y_{2}^{i} \sim \{w_{x_i,x_j},v_{y_i,y_k}, v_{z_i,z_m}\}$,\\
	$y_{3}^{i} \sim \{w_{x_i,x_j},w_{y_i,y_k}, w_{z_i,z_m}\}$,\\
	$y_{4}^{i} \sim \{v_{x_i,x_j},v_{y_i,y_k}, w_{z_i,z_m}\}$.\\
	Notice that we may have only a subset of these edges depending on the existence of
	$x_j$ (the copy of variable $x$ in $c_j$),
	$y_k$ (the copy of variable $y$ in $c_k$) and
	$z_m$ (the copy of variable $z$ in $c_m$).
	\item Edges for different copies of a variable:
	for each arbitrary pair $x_i$ and $x_j$ which are different copies of the
	same variable, \\
	$v_{x_i,x_j} \sim \{13_{x_i}, 13_{x_j},5_{x_i}, 5_{x_j}\}$,\\
	$v_{x_i,x_j} \sim \{11_{x_i}, 15_{x_i}, 11_{x_j}, 15_{x_j}\}$,\\
	$w_{x_i,x_j} \sim \{13_{x_i}, 13_{x_j},5_{x_i}, 5_{x_j}\}$,\\
	$w_{x_i,x_j} \sim \{6_{x_i}, 7_{x_i}, 16_{x_i}, 6_{x_j}, 7_{x_j}, 16_{x_j}\}$,~see Figure~\ref{fig:Two copies of a variable}.\\
	\begin{figure}
	\begin{center}
	\includegraphics[scale=.42]{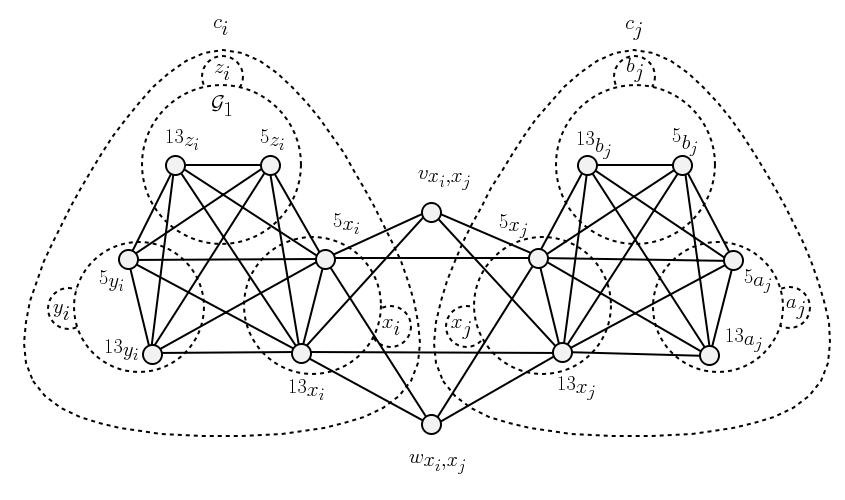}
	\caption{A subgraph of $G(\phi)$ corresponding to the clause $c_i=x_i \vee y_i \vee z_i$.}
	\label{fig:Two copies of a variable}
	\end{center}
	\end{figure}
	\item Edges of variable copies:\\
	for an arbitrary variable $x$ and all $i\neq j$ and $k \neq l$, we have $w_{x_i,x_j} \sim w_{x_k,x_l}$
	and $v_{x_i,x_j} \sim v_{x_k,x_l}$.
	\end{itemize}
	It is an easy observation to see that $G(\phi)$ can be be
	constructed from $\phi$ in polynomial time.

\begin{lemma}
\label{lemma:SATtoSquare}
There exists a truth assignment to variables in instance
$\phi$ of \textit{POSITIVE AND MINIMUM INTERSECTING 1-in-3 SAT}
that satisfies the formula if and only if
there exists a graph $H$
of girth five such that $G(\phi)=H^2$.
\end{lemma}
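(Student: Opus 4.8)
The plan is to establish the two directions of the equivalence separately, using the three structural lemmas (Lemma~\ref{lem:clause}, Lemma~\ref{lem:copiesofavariable}, and Theorem~\ref{lemma:uniquepairofgeminis}) already proved.

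For the forward direction, I would start from a satisfying assignment of $\phi$ and explicitly build $H$. For each copy $x_i$ of a variable $x$, put the induced subgraph $H[V_{x_i}]$ to be $\mathcal{G}_1$ if $x$ is FALSE and $\mathcal{G}_2$ if $x$ is TRUE; since the 1-in-3 condition holds, for each clause $c_i = x_i \vee y_i \vee z_i$ exactly one of the three copies gets $\mathcal{G}_2$. For the four clause vertices $y_1^i,\dots,y_4^i$, I would use the description in Figure~\ref{fig:clause} (and Lemma~\ref{lem:clause}): there is a choice of how the four vertices attach to the vertices $5,13$ in the three variable gadgets — up to relabelling, there is exactly one valid Petersen-graph completion given which single copy is $\mathcal{G}_2$ — and I would fix that completion. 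For each pair of copies $x_i,x_j$ of the same variable, I would attach $v_{x_i,x_j}$ and $w_{x_i,x_j}$ as prescribed by the assignment (the ``if both TRUE / else'' rule in the text), forming a $C_6$ in $H$; Lemma~\ref{lem:copiesofavariable} is exactly the statement that this is consistent, i.e. both copies are on the same side. Then I would verify two things: that $H$ has girth exactly $5$ (no triangle and no $C_4$ — this needs checking that the added $v,w$ and $y^i$ vertices, together with the $\mathcal{G}_1/\mathcal{G}_2$ gadgets, introduce no short cycle; the gadgets themselves have girth $5$, the clause gadget produces a Petersen graph which has girth $5$, and the copy gadget produces a $C_6$), and that $H^2 = G(\phi)$. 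The latter is a direct edge-by-edge check: every ``variable edge'', ``clause edge'', ``intra clause edge'', ``edges for different copies'', and ``edges of variable copies'' listed in the construction is exactly the set of pairs at distance $\le 2$ in $H$, because the clause-edge and intra-clause-edge lists were literally generated from $N_{\mathcal{G}_1}[5], N_{\mathcal{G}_2}[5]$ etc.

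For the reverse direction, suppose $G(\phi) = H^2$ with $g(H) \ge 5$. The key point is that the rigidity lemmas force $H$ to have exactly the gadget structure above. First, by Theorem~\ref{lemma:uniquepairofgeminis}, for each copy $x_i$ the subgraph $H[V_{x_i}]$ — whose square is $\mathfrak{G}$ — must be isomorphic to $\mathcal{G}_1$ or to $\mathcal{G}_2$ (one must first argue that $H[V_{x_i}]^2 = \mathfrak{G}$, i.e. that no vertex outside $V_{x_i}$ can create an edge inside $V_{x_i}$ in the square; this follows because any such vertex would be within distance $2$ in $H$ of the relevant pair, but $V_{x_i}$ induces a connected graph of girth $5$ and the only vertices of $G(\phi)$ adjacent to $V_{x_i}$ are $y^i_\ell, v_{x_i,x_j}, w_{x_i,x_j}$, which are adjacent only to $\{5,11,13,15,6,7,16\}_{x_i}$ — one checks those cannot shorten any distance inside the gadget below what $\mathcal{G}_1$/$\mathcal{G}_2$ already gives). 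Define the assignment: $x$ is TRUE iff $H[V_{x_i}] \cong \mathcal{G}_2$. Lemma~\ref{lem:copiesofavariable} guarantees this is well-defined (independent of which copy $x_i$ we look at), using the $W_{x_i,x_j}$ gadget present for every pair of copies. Lemma~\ref{lem:clause} guarantees that in each clause gadget, exactly one of the three copies is $\mathcal{G}_2$ — i.e. exactly one of $x_i,y_i,z_i$ is TRUE. Hence the assignment is a valid 1-in-3 satisfying assignment for $\phi$.

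I expect the main obstacle to be the reverse direction's ``localization'' step: proving that $H[V_{x_i}]$ really is an isolated gadget whose square is $\mathfrak{G}$, rather than some subgraph that borrows distance-$2$ shortcuts through the clause and copy vertices. This requires a careful case analysis of the neighbourhoods of $y^i_\ell, v_{x_i,x_j}, w_{x_i,x_j}$ in $H$ (not just in $G$), showing that these attachment vertices sit at the ``$5$/$13$'' ports in a way that is forced by girth $\ge 5$ and by the $K_{10}$'s and $C_6$'s being realized as the Petersen graph and as hexagons; once each gadget is pinned down as $\mathcal{G}_1$ or $\mathcal{G}_2$, the rest is bookkeeping via the two cited lemmas. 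A secondary (but routine) obstacle is the girth-$5$ verification in the forward direction, which amounts to checking that gluing girth-$5$ pieces along the prescribed shared vertices creates no $C_3$ or $C_4$; this is immediate from the fact that shared vertices always have non-adjacent images and the gluing is along single vertices or along the independent-in-$H$ ports $5$ and $13$.
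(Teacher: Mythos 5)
Your proposal follows essentially the same route as the paper: the forward direction builds $H$ gadget-by-gadget from the satisfying assignment exactly as the paper does, and the reverse direction invokes Theorem~\ref{lemma:uniquepairofgeminis}, Lemma~\ref{lem:copiesofavariable} and Lemma~\ref{lem:clause} in the same order to read off a 1-in-3 assignment. The only place you go beyond the paper is in flagging the localization step (arguing that $H[V_{x_i}]^2=\mathfrak{G}$ so that Theorem~\ref{lemma:uniquepairofgeminis} actually applies) and the girth-$5$ verification of the constructed $H$ --- both of which the paper dismisses as trivial --- so your added caution identifies real gaps in the paper's write-up rather than a different approach.
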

\begin{proof}
\begin{itemize}
\item \textit{Satisfiability to squareness:}
\begin{itemize}
\item[-] $H$ construction: we construct the graph $H$ by using a satisfying assignment of $\phi$ as follows:\\
\begin{itemize}
\item  For all $i$ such that
there exists a clause $c_i$ where $x_i \in c_i$,
$H[\{1,2,\ldots,16\}_{x_i}] =\mathcal{G}_2$ if $x$ is true
and $H[\{1,2,\ldots,16\}_{x_i}] =\mathcal{G}_1$
if $x$ is false.

\item For each pair of $x_i$ and $x_j$ where $i\neq j$
if $x$ is true then $v_{x_i,x_j} \overset{H}{\sim} \{13_{x_i},13_{x_j}\}$ and
$w_{x_i,x_j} \overset{H}{\sim} \{5_{x_i},5_{x_j}\}$.
Otherwise, that is if $x$ is false, $w_{x_i,x_j} \overset{H}{\sim} \{13_{x_i},13_{x_j}\}$ and
$v_{x_i,x_j} \overset{H}{\sim} \{5_{x_i},5_{x_j}\}$.

\item For each clause $c_i=x_i \vee y_i \vee z_i$:\\
if $x_i$ is true then\\
$y_{1}^{i} \overset{H}{\sim} \{13_{x_i},13_{y_i},5_{z_i}\}$, 
$y_{2}^{i} \overset{H}{\sim} \{5_{x_i},5_{y_i},5_{z_i}\}$, 
$y_{3}^{i} \overset{H}{\sim} \{5_{x_i},13_{y_i},13_{z_i}\}$, 
$y_{4}^{i} \overset{H}{\sim} \{13_{x_i},5_{y_i},13_{z_i}\}$.\\

if $y_i$ is true then \\
$y_{1}^{i} \overset{H}{\sim} \{5_{x_i},5_{y_i},5_{z_i}\}$, 
$y_{2}^{i} \overset{H}{\sim} \{13_{x_i},13_{y_i},5_{z_i}\}$, 
$y_{3}^{i} \overset{H}{\sim} \{13_{x_i},5_{y_i},13_{z_i}\}$, 
$y_{4}^{i} \overset{H}{\sim} \{5_{x_i},13_{y_i},13_{z_i}\}$.\\

if $z_i$ is true then\\
$y_{1}^{i} \overset{H}{\sim} \{5_{x_i},13_{y_i},13_{z_i}\}$, 
$y_{2}^{i} \overset{H}{\sim} \{13_{x_i},5_{y_i},13_{z_i}\}$, 
$y_{3}^{i} \overset{H}{\sim} \{13_{x_i},13_{y_i},5_{z_i}\}$, 
$y_{4}^{i} \overset{H}{\sim} \{5_{x_i},5_{y_i},5_{z_i}\}$.\\

Recall that in all cases $10$ vertices $y_{i}^{1},y_{i}^{2},y_{i}^{3}, y_{i}^{4}$,
$5_{x_i}, 5_{y_i}, 5_{z_i}, 13_{x_i}, 13_{y_i}, 13_{z_i}$
form a Petersen graph in $H$.
\end{itemize}
\item[-] $H^2=G(\phi)$: trivial.
\end{itemize}
\item \textit{Squareness to satisfiability:}\\
Let $H$ be a square root of $G(\phi)$.
By Theorem~\ref{lemma:uniquepairofgeminis}, graph $H[V_{x_i}]$
(for each copy of an arbitrary $x$) is isomorphic either to $\mathcal{G}_1$
or $\mathcal{G}_2$. We set $x$ to be true when $H[V_{x_i}] \cong \mathcal{G}_2$
and false otherwise. By Lemma~\ref{lem:copiesofavariable} all other copies of $x$
would also have the same truth value. By Lemma~\ref{lem:clause}
this assignment is a truth assignment to $\phi$ since exactly one variable in each clause is evaluated
as true.\end{itemize} \end{proof}

As an example let $\phi: c_1 \wedge c_2 \wedge c_3$ and $c_1= x_1 \vee y_1\vee z_1$,
$c_2= x_2 \vee u_2 \vee v_2 $ and $c_3= y_3 \vee a_3 \vee b_3$, where
$x=b=TRUE$ and $y=z=u=a=v=FALSE$. 
The graph
shown in Figure~\ref{fig:ill} is the square root of $G(\phi)$.

	\begin{figure}
	\begin{center}
	\includegraphics[scale=.41]{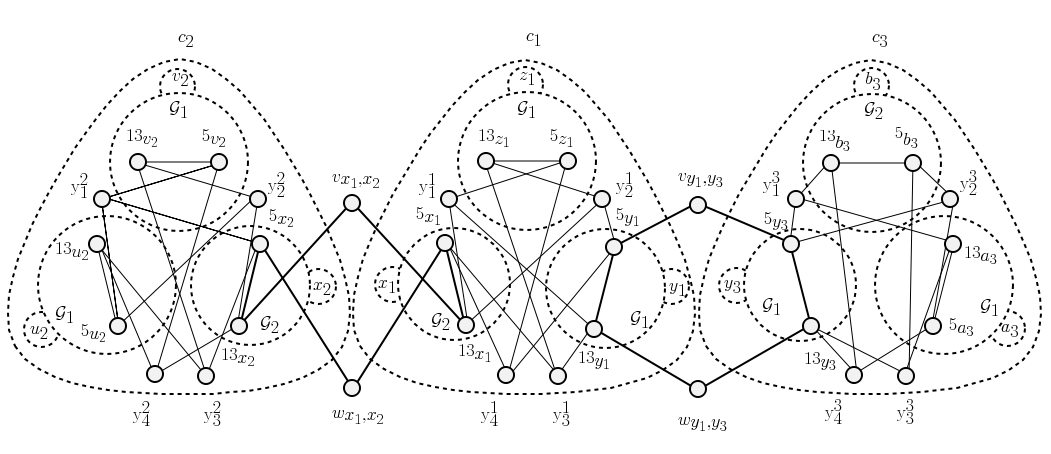}
	\caption{An example of $H$.}
	\label{fig:ill}
	\end{center}	
	\end{figure}
	
\begin{theorem}
\textsc{Square of Graphs With Girth Five} is NP-complete.
\end{theorem}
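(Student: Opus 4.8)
The plan is to establish NP-completeness by combining membership in NP with a polynomial-time reduction from \textsc{Positive and Minimum Intersecting 1-in-3 SAT}, which was shown to be NP-complete above. Membership in NP is immediate: given a candidate square root $H$, one verifies in polynomial time that $g(H) \ge 5$ and that $H^2 = G$ by computing all pairwise distances in $H$. For hardness, I would take the reduction graph $G(\phi)$ constructed above from an arbitrary instance $\phi$ of \textsc{Positive and Minimum Intersecting 1-in-3 SAT}, note that it is constructible in polynomial time (already observed in the excerpt), and invoke Lemma~\ref{lemma:SATtoSquare}, which asserts precisely that $\phi$ has a satisfying 1-in-3 assignment if and only if $G(\phi)$ has a square root of girth five. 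Chaining these gives a polynomial-time many-one reduction, so \textsc{Square of Graphs With Girth Five} is NP-hard, and together with NP-membership, NP-complete.

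The one point that still needs care at this level is that Lemma~\ref{lemma:SATtoSquare} should be applied to \emph{every} instance $\phi$, including the technical issue that a given variable may occur in an unbounded number of clauses: the ``edges for different copies of a variable'' must be laid out so that, for each variable $x$, all the $v_{x_i,x_j}$ are mutually adjacent and all the $w_{x_i,x_j}$ are mutually adjacent, realizing a consistent truth value across all copies via Lemma~\ref{lem:copiesofavariable}. Since the ``minimum intersecting'' restriction guarantees that two clauses share at most one variable, the clause gadgets (built from $\mathfrak{G}$ and a Petersen graph, controlled by Lemma~\ref{lem:clause}) do not interfere with one another, and the girth-five condition on $H$ is exactly what forces each variable block $H[V_{x_i}]$ to be isomorphic to $\mathcal{G}_1$ or $\mathcal{G}_2$ by Theorem~\ref{lemma:uniquepairofgeminis}.

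The main obstacle is not in this wrap-up theorem but in the supporting machinery it relies on: verifying that the forward direction of Lemma~\ref{lemma:SATtoSquare} genuinely produces a graph $H$ of girth five with $H^2 = G(\phi)$ (the excerpt dismisses $H^2 = G(\phi)$ as ``trivial,'' but one must check no spurious edges or short cycles are introduced at the interfaces between gadgets), and that the reverse direction's appeal to Lemmas~\ref{lem:clause} and~\ref{lem:copiesofavariable} correctly rules out all non-girth-five square roots. Given those lemmas as established, the proof of the final theorem itself is a short, routine assembly: state NP-membership, state the reduction, cite Lemma~\ref{lemma:SATtoSquare} for correctness, and conclude.
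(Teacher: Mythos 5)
Your proposal matches the paper's own (implicit) argument exactly: NP-membership is the easy observation made at the start of Section~\ref{sec:girthfive}, and NP-hardness follows by combining the polynomial-time constructibility of $G(\phi)$ with Lemma~\ref{lemma:SATtoSquare}, reducing from \textsc{Positive and Minimum Intersecting 1-in-3 SAT}. The assembly is correct and is essentially the same route the paper takes.
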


\begin{theorem}[The Complete Dichotomy Theorem]
\label{Theorem:mine}
\textsc{Square of Graphs With Girth $g$} is NP-complete if and only if $g \leq 5$.
\end{theorem}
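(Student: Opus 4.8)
The plan is to treat this statement as a biconditional and prove each direction by assembling results that are already available: the NP-completeness of the girth-$\le 5$ cases on one side, and a polynomial-time algorithm for the girth-$\ge 6$ cases on the other. So there is little new mathematics to produce here; the proof is essentially a packaging argument, and its only genuinely new ingredient — the $g=5$ case — is the content of the preceding theorem and of Section~\ref{sec:girthfive}.

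For the direction ``$g\le 5 \Rightarrow$ NP-complete'', I would first dispatch membership in NP uniformly in $g$: if $G=H^2$ then $V_H=V_G$ by the definition of powers, so the edge set of $H$ is a certificate of polynomial size, and one can verify in polynomial time both that $H^2=G$ (recompute the square) and that $g(H)\ge g$ (e.g.\ by a shortest-cycle computation). For NP-hardness I would split into three cases. When $g\le 3$ the requirement ``girth at least $g$'' imposes no restriction on simple graphs (which have no cycle of length $1$ or $2$), so \textsc{Square of Graphs With Girth $g$} is literally the graph-squaring recognition problem proved NP-complete by Motwani and Sudan~\cite{MotSud1994}. The case $g=4$ is the NP-completeness result of Farzad et al.~\cite{Far2009}. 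The case $g=5$ is exactly the theorem stated just above, whose proof is the reduction from \textsc{Positive and Minimum Intersecting 1-in-3 SAT} built in this section (Lemma~\ref{lemma:SATtoSquare}, resting on Theorem~\ref{lemma:uniquepairofgeminis} and Lemmas~\ref{lem:clause} and~\ref{lem:copiesofavariable}).

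For the direction ``NP-complete $\Rightarrow g\le 5$'', I would argue the contrapositive and show that for every fixed $g\ge 6$ the problem lies in P, hence is not NP-complete unless P $=$ NP. Given $G$, run the polynomial-time algorithm of Farzad et al.~\cite{Far2009}: if $G$ has no square root of girth at least $6$, then it certainly has none of girth at least $g\ge 6$, and we answer ``no''; otherwise the algorithm yields a square root $H_0$ with $g(H_0)\ge 6$, and by the uniqueness theorem of Adamaszek and Adamaszek~\cite{Adam2011} every square root of $G$ of girth at least $6$ is isomorphic to $H_0$. Consequently $g(H_0)$ is an invariant of $G$, and $G$ has a square root of girth at least $g$ if and only if $g(H_0)\ge g$, which is checked directly. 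This decides \textsc{Square of Graphs With Girth $g$} in polynomial time for every $g\ge 6$.

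The hard part is not in this theorem at all — it is the $g=5$ NP-hardness established earlier. The only steps here that require a sentence of care are: (i) that the algorithm of~\cite{Far2009}, although phrased for girth exactly $6$, combined with the uniqueness result of~\cite{Adam2011}, actually solves ``girth at least $g$'' for all $g\ge 6$; and (ii) that, as is customary for dichotomy statements, the ``only if'' direction is meant modulo P $\ne$ NP — equivalently, the theorem may be read as ``NP-complete for $g\le 5$ and polynomial-time solvable for $g\ge 6$'', which is exactly what the two directions above establish.
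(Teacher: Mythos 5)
Your proposal is correct and matches the paper's intended argument: the paper gives no explicit proof of this theorem, simply assembling (in the Conclusions) the NP-completeness results of Motwani--Sudan ($g\leq 3$), Farzad et al.\ ($g=4$), and the present paper ($g=5$) with the polynomial-time algorithm for girth at least $6$, which is exactly your packaging. Your added care about NP membership, the uniqueness result of Adamaszek--Adamaszek for handling arbitrary fixed $g\geq 6$, and the ``modulo P $\neq$ NP'' reading of the only-if direction makes your write-up more complete than what the paper provides, but it is the same route.
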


\section{Conclusions}
We have disproved the conjecture in \cite{Far2009} by showing that
\textsc{Square of Graphs With Girth Five} is NP-complete.
Together with results provided by Motwani and Sudan \cite{MotSud1994} and
Farzad et al. \cite{Far2009}, we presented Theorem~\ref{Theorem:mine} as a
complete dichotomy theorem for square root finding problem.
\par The problem of square root finding for graphs can be restated for higher roots.

\begin{center}
\textsc{$k^{th}$ Power of a Graph With Girth $r$}
\begin{tabular}{lp{10cm}}
Instance: & A graph $G$.\\
Question: & Does there exists a graph $H$ with girth $r$ such that $G=H^k$.
\end{tabular}
\end{center}

\par The problem of root finding for higher root is an open problem
in terms of the $r^{th}$-root of the power graph.
Results provided by Adamaszek and Adamaszek \cite{Adam2010} is the closest
result to a complete girth-parametrized complexity dichotomy.
They proved that the recognition problem of \textsc{$k^{th}$Power of a Graph With Girth $r$}
is NP-complete when $r=k$ while there is a polynomial time
algorithm to find all $k^{th}$-roots of girth $2k+3$ for a given graph.
\par The problem of finding a complete girth-parametrized complexity dichotomy for
\textsc{$k^{th}$Power of a Graph With Girth $r$} is open,
and we conjectured the following:
\begin{con}
\textsc{$k^{th}$Power of a Graph With Girth $r$} for $r=2k+1$ is NP-complete.
\end{con}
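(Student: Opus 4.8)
The plan is to lift the entire reduction of Section~\ref{sec:girthfive} from the case $(k,r)=(2,5)$ to general $(k,r)=(k,2k+1)$, again reducing from \textsc{Positive and Minimum Interesting 1-in-3 SAT}, which is NP-complete. Membership in NP is immediate: a certificate is a graph $H$, and one checks in polynomial time both that $g(H)=2k+1$ and that $H^k=G$ (computing all pairwise distances up to $k$ is polynomial). So the whole burden is the hardness direction, which rests on three gadgets exactly as before, now all of girth $2k+1$: a \emph{variable gadget}, a \emph{clause gadget}, and a \emph{consistency gadget}.

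First I would construct the variable gadget: a pair of non-isomorphic graphs $\mathcal{G}_1^{(k)},\mathcal{G}_2^{(k)}$ of girth exactly $2k+1$ sharing a common $k$-th power $\mathfrak{G}_k:=(\mathcal{G}_1^{(k)})^k=(\mathcal{G}_2^{(k)})^k$, playing the roles of FALSE and TRUE. As in the base case the two graphs should differ only by swapping the closed $k$-neighbourhoods of two distinguished ``switch'' vertices (the analogues of $5$ and $13$), so that a truth value is encoded by which of the two roots is realized. The crucial companion statement is a \emph{rigidity lemma} generalizing Theorem~\ref{lemma:uniquepairofgeminis}: $\mathfrak{G}_k$ has no $k$-th root of girth $2k+1$ other than $\mathcal{G}_1^{(k)}$ and $\mathcal{G}_2^{(k)}$ up to isomorphism. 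This is where I expect the main obstacle to lie. The structural leverage at girth $2k+1$ is that any two vertices at distance at most $k$ are joined by a \emph{unique} geodesic --- two distinct geodesics of length $\le k$ would close a cycle of length at most $2k$, below the girth --- so the adjacencies of $\mathfrak{G}_k$ pin down the root locally and essentially uniquely. The difficulty is that this local rigidity, unlike at girth $2k+3$ where it forces a \emph{unique} root \cite{Adam2010}, leaves exactly one degree of freedom at the swap interface; the real work is to prove by a neighbourhood-reconstruction argument that this freedom yields precisely the two completions $\mathcal{G}_1^{(k)}$ and $\mathcal{G}_2^{(k)}$ and no spurious third root, a control that becomes delicate as $k$ grows and long induced paths without short cycles proliferate.

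Second I would build the clause gadget enforcing that \emph{exactly one of three} variables is true. In the base case this was a Petersen graph (girth $5$) whose $k$-th power on ten prescribed vertices is $K_{10}$, with the three satisfying patterns realized by automorphisms permuting the auxiliary vertices $y_1^i,\dots,y_4^i$. The generalization needs a vertex-transitive, girth-$(2k+1)$ graph whose $k$-th power on a prescribed set is complete and whose automorphism group realizes exactly the one-in-three patterns; a cage or generalized-Petersen-type graph of girth $2k+1$ is the natural candidate, and I would prove the analogue of Lemma~\ref{lem:clause} by counting admissible permutations of the auxiliary vertices, showing that zero or more than one TRUE variable admits no girth-$(2k+1)$ completion. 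Third, the consistency gadget forcing all copies of a variable to agree: where the base case closed a $C_6$ through the switch vertices, here I would close an even cycle of length $2(2k+1)$ (or a path of the correct parity) so that girth $2k+1$ is preserved, and the analogue of Lemma~\ref{lem:copiesofavariable} follows because a disagreement would force an edge of $\mathfrak{G}_k$ between two switch vertices that are in fact at distance $>k$.

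Finally I would assemble $G=G(\phi)$ from these gadgets as in the \emph{Reduction Graph} paragraph and verify both directions of the correspondence: a satisfying assignment yields a girth-$(2k+1)$ root by choosing $\mathcal{G}_2^{(k)}$ on true copies and $\mathcal{G}_1^{(k)}$ on false ones and routing the auxiliary adjacencies as dictated by the clause gadget, while conversely the rigidity, clause, and consistency lemmas turn any girth-$(2k+1)$ root back into a satisfying assignment. Beyond the rigidity lemma, the delicate assembly point is to certify that gluing the gadgets creates \emph{no} cycle shorter than $2k+1$ and \emph{no} unexpected edge in the $k$-th power; this amounts to checking that every new path between distinguished vertices has length exactly $k$ (for intended power edges) or length $>k$ with no chord completing a short cycle. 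For $k=2$ these interface checks were routine because the boundaries were small, but controlling them uniformly in $k$, together with the rigidity lemma, is what I expect to consume the bulk of the proof.
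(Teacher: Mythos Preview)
The statement you are attempting to prove is not a theorem in the paper but a \emph{conjecture}: it appears in the Conclusions section as Conjecture~1, explicitly listed as open, and the paper supplies no proof whatsoever. There is therefore nothing on the paper's side to compare your proposal against.

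As for the proposal itself, it is a research programme rather than a proof, and you are candid about this: you repeatedly flag that the rigidity lemma generalizing Theorem~\ref{lemma:uniquepairofgeminis}, the existence of a suitable girth-$(2k+1)$ clause gadget, and the interface checks are all unresolved. These are not technicalities but the entire content of the conjecture. In particular, you \emph{assume} the existence of a pair $\mathcal{G}_1^{(k)},\mathcal{G}_2^{(k)}$ of non-isomorphic girth-$(2k+1)$ graphs with identical $k$-th power and exactly two $k$-th roots of that girth; no such construction is given, and for $k=2$ the pair $\mathcal{G}_1,\mathcal{G}_2$ was found ad hoc and verified by the lengthy case analysis in Appendix~A, not produced by any general mechanism. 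Likewise, your clause gadget relies on an unspecified vertex-transitive girth-$(2k+1)$ graph with carefully controlled $k$-th power and automorphism behaviour; cages of girth $2k+1$ are not even known to exist for all $k$ with the degree and symmetry properties you need. Until these objects are actually constructed and their rigidity proved, the argument has a genuine gap at every load-bearing step, which is exactly why the paper states this as a conjecture rather than a theorem.
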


\newpage
\section*{Appendix}
\subsection*{A: Proof of Theorem~\ref{lemma:uniquepairofgeminis}}
Unique pair of square roots for $\mathfrak{G}$: In this appendix we show that
the graph $\mathfrak{G}={\mathcal{G}_{1}}^2(={\mathcal{G}_{2}}^2)$
has only two non-isomorphic square roots which are $\mathcal{G}_1$ and $\mathcal{G}_2$.
For the rest of this subsection we denote $\mathfrak{G}$ for the square of 
$\mathcal{G}_{1}$ (or $\mathcal{G}_{2}$).

\begin{proof}
\begin{lemma}
\label{lemma:uniqueneighbourhoodof1}
Let $\mathfrak{G}=H^2$ for $g(H)=5$, then $N_H (1)=\{7,11,12\}$.
\end{lemma}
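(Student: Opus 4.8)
The plan is to pin down $N_H(1)$ by combining the trivial inclusion $N_H[1]\subseteq N_{\mathfrak G}[1]$ with the strong local restrictions that $g(H)=5$ imposes, namely the absence of triangles and $4$-cycles in $H$. I would begin by recording the two facts that control every girth-at-least-$5$ square root: for each vertex $v$, (i) $N_H[v]$ is a clique of $\mathfrak G$ with $N_H[v]\subseteq N_{\mathfrak G}[v]$, and (ii) $N_{\mathfrak G}[v]=\bigcup_{u\in N_H[v]}N_H[u]$, so that the radius-one ball of $v$ in $\mathfrak G$ is covered by the $H$-stars centred at the $H$-neighbours of $v$. Reading $\mathfrak G$ off Figure~\ref{fig:gonetwo} as $\mathcal G_1^2$, I would write down $N_{\mathfrak G}[1]$ explicitly together with $N_{\mathfrak G}[u]$ for each $u\in N_{\mathfrak G}(1)$; by (i) the set $N_H(1)$ is a clique of $\mathfrak G$ contained in this candidate list.

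The next step is to show $\deg_H(1)=3$. For the upper bound, a direct inspection of $\mathfrak G$ shows that the largest clique of $\mathfrak G$ through vertex $1$ has size $4$, so by (i) we get $\deg_H(1)\le 3$. For the lower bound I would use (ii) together with girth: two distinct $H$-neighbours $u_1,u_2$ of $1$ can have no common $H$-neighbour other than $1$ (else a triangle $1u_1u_2$ or a $4$-cycle $1u_1wu_2$), so the stars $N_H[u]$, $u\in N_H(1)$, overlap only in $\{1\}$; a counting argument then shows that with at most two such stars one cannot cover all of $N_{\mathfrak G}[1]$ while keeping every $u$ inside the allowed clique list, and degrees $0$ and $1$ are excluded because $N_{\mathfrak G}(1)$ is not itself a clique of $\mathfrak G$. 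Hence $\deg_H(1)=3$ and $N_H(1)$ is a $3$-element, $H$-independent subset $\{a,b,c\}$ of $N_{\mathfrak G}(1)$ whose three $\mathfrak G$-balls jointly cover $N_{\mathfrak G}[1]$.

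Finally I would run through the $3$-subsets of $N_{\mathfrak G}(1)$ meeting these necessary conditions and eliminate all but $\{7,11,12\}$: for every other candidate triple, the covering requirement (ii) forces two of its vertices to be at $H$-distance $2$ through two internally disjoint paths of length $2$ from $1$, producing a $4$-cycle, or forces a third vertex of $N_{\mathfrak G}(1)$ to be $H$-adjacent to two of $a,b,c$, again producing a short cycle. The triple $\{7,11,12\}$ survives and is realized both by $\mathcal G_1$ and by $\mathcal G_2$, which is exactly what the lemma asserts. I expect this last enumeration to be the main obstacle, because a candidate neighbour of $1$ may a priori lie on a $5$-cycle of $H$ — in which case its $H$-neighbourhood is not dictated by any maximal clique of $\mathfrak G$ — so the elimination has to be carried out essentially by hand, keeping track, for each pair of vertices of $N_{\mathfrak G}[1]$, of whether it may or may not be at $H$-distance $2$.
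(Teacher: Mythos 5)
There is a genuine gap in your degree bound. You claim that ``a direct inspection of $\mathfrak G$ shows that the largest clique of $\mathfrak G$ through vertex $1$ has size $4$,'' and you deduce $\deg_H(1)\le 3$ from it. This premise is false: vertex $1$ lies on the $5$-cycle $1,7,13,5,11$ of $\mathcal G_1$, and the square of a $5$-cycle is $K_5$, so $\{1,5,7,11,13\}$ is a clique of size $5$ in $\mathfrak G$ containing $1$ (one checks directly that all ten pairs are at distance at most $2$ in $\mathcal G_1$). Ironically, you flag exactly this phenomenon in your last paragraph --- that a vertex on a $5$-cycle of $H$ has a closed neighbourhood not captured by the ``obvious'' maximal cliques --- but you do not notice that it already defeats your upper bound at vertex $1$ itself. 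Consequently $\deg_H(1)=4$ is not excluded by your argument, and $4$-element candidates such as $N_H(1)=\{5,7,11,13\}$ survive your filter; ruling them out needs a separate covering argument (for instance, $2\in N_{\mathfrak G}(1)$ would force $2\overset{H}{\sim}5$, whence $2\overset{\mathfrak G}{\sim}13$, which is false). In addition, the decisive elimination of the remaining $3$-subsets is only announced, not carried out, so the proof is a plan rather than a proof at its most delicate point.

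For comparison, the paper avoids the degree-counting route altogether: it proves $1\overset{H}{\sim}12$, then $1\overset{H}{\sim}11$, then $1\overset{H}{\sim}7$, each by assuming the contrary, intersecting $N_{\mathfrak G}(1)$ with the relevant $\mathfrak G$-neighbourhood to confine $N_H(1)$, and deriving a contradiction from girth and coverage; it then concludes from the maximality of the clique $\{1,7,11,12\}$ in $\mathfrak G$ that no further neighbour can be added. If you want to keep your global strategy, you must first repair the upper bound (e.g., by the kind of coverage contradiction above applied to every clique of size $5$ through $1$) and then actually perform the enumeration of the surviving candidate neighbourhoods.
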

\begin{proof}
We show this in the following four steps:
\begin{itemize}
	\item[I] $1 \overset{H}{\sim} 12$: Assume otherwise and let $1 \overset{H}{\nsim} 12$,
	now since $1 \overset{\mathfrak{G}}{\sim} 12$ then $N_{H}(1) \subseteq (N_{\mathfrak{G}}(12) \cap N_{\mathfrak{G}}(1))=\{11,7,2,3\}$.
	In other hand we have $11 \overset{\mathfrak{G}}{\sim} 7$, $2 \overset{\mathfrak{G}}{\sim} 3$ but non of $11$ or $7$ is not
	adjacent to any  of $2$ and $3$, therefore either $N_{H}(1) \subseteq \{2,3\}$ or
	$N_{H}(1) \subseteq \{7,11\}$. If $N_{H}(1) \subseteq \{2,3\}$ then we have a contradiction with
	$11 \in N_{\mathfrak{G}}(1)$, and if $N_{H}(1) \subseteq \{7,11\}$ we again have a contradiction with
	$2 \in N_{\mathfrak{G}}(1)$, this implies $1 \overset{H}{\sim} 12$.
	\item[II] $1 \overset{H}{\sim} 11$: Assume otherwise and let $1 \overset{H}{\nsim} 11$, 
	now since $1 \overset{\mathfrak{G}}{\sim} 11$ then $N_{H}(1) \subseteq (N_{\mathfrak{G}}(11) \cap N_{\mathfrak{G}}(1))=\{7,5,12,13,14\}$.
	But according to part $I$, we know that $1 \overset{H}{\sim} 12$, therefore $N_{H}(1) \subseteq \{7,12\}$,
	and this is a contradiction because non of $12$ and $7$ are not adjacent to $14$ in $\mathfrak{G}$, so it
	implies $1 \overset{H}{\sim} 11$.
	\item[III] $1 \overset{H}{\sim} 7$: Assume otherwise and let $1 \overset{H}{\nsim} 7$,
	now since $1 \overset{\mathfrak{G}}{\sim} 7$ then $N_{H}(1) \subseteq (N_{\mathfrak{G}}(7) \cap N_{\mathfrak{G}}(1))=\{5,11,12,13\}$.
	Again according to part $I$ and $II$,$1 \overset{H}{\sim} 11, 12$, therefore $N_{H}(1) = \{11,12\}$.
	Here we have two possibilities, either $7 \overset{H}{\sim} 11$ or $7 \overset{H}{\sim} 12$.
	If $7 \overset{H}{\sim} 11$, since $14 \overset{\mathfrak{G}}{\nsim} 12$ then $14 \overset{H}{\sim} 11$ and this is
	a contradiction since $7 \overset{\mathfrak{G}}{\nsim} 14$. If $7 \overset{H}{\sim} 12$, since $2 \overset{\mathfrak{G}}{\nsim} 11$
	then $2 \overset{H}{\sim} 12$ and this is a contradiction since $7 \overset{\mathfrak{G}}{\nsim} 2$.
	So it implies $1 \overset{H}{\sim} 7$.
	\item[IV] $N_{H}(1) = \{7,11,12\} $: Since $\{7,11,12\}$ is a maximal clique in $\mathfrak{G}$,
	and $\{7,11,12\} \subseteq N_{H}(1)$ therefore $N_{H}(1) = \{7,11,12\} $.
\end{itemize}
\end{proof}

For more convenient
we use the following notation. For $v \in V_\mathfrak{G}$ let $x \in N_{\mathfrak{G}}(v)-N_{H}(v)$, we define $L_v (x)$ as follows:
\begin{center}
$L_v (x)=\{u\in N_{H}(v) \mid ux \in E_{\mathfrak{G}}\}=N_{\mathfrak{\mathfrak{G}}}(x) \cap N_{H}(v)$
\end{center}.
Since girth of $H$ is $5$ for all vertices $v$ and $x$ where $x \in N_{\mathfrak{G}}(v)-N_{H}(v)$,
there is a unique $u \in L_v(x)$ such that $u \overset{H}{\sim} x$.\\
According to Lemma~\ref{lemma:uniqueneighbourhoodof1} we have:
\begin{itemize}
\item[-]$N_{H}(12)=\{2,3,1\}$, since $L_{1}(2)=L_{1}(3)=12$.
\end{itemize}
Also $14 \overset{H}{\sim} 11$ (because $L_{1}(14)=11$), but 
$L_{1}(5)=L_{1}(13)=\{7,11\}$, hence we have two possibilities:
\begin{itemize}
	\item[I] Case 1: $5 \overset{H}{\sim} 11$ and $13 \overset{H}{\sim} 7$:
	\begin{itemize}
		\item[-] $N_{H}(7)=\{1,11\}$: trivial.
		\item[-] $N_{H}(11)=\{1,5,14\}$: trivial.
		\item[-] $N_{H}(13)=N_{G}(7)-N_{H}(7)-\{11,12\}=\{5,6,16\}$: since $13$ and $1$
		are the only neighbours of $7$.\\
		We now consider the set $N_\mathfrak{G}(13)-N_H(13)= \{1,3,8,10,11,14,15\}$,
		we have $L_{13}(3)=\{16\},$ $ L_{13}(8)=\{6,16\},$ $L_{13}(10)=\{16\},$ $L_{13}(14)=\{5,6,16\}
		,$ $ L_{13}(15)=\{5\}$,
		therefore:
		\item[-] $N_{H}(5)=\{13,15,11\}$, since $14 \overset{\mathfrak{G}}{\sim} 11$ (otherwise we have a cycle of length four).
		\item[-] $N_{H}(16)=\{13,3,8,10\}$, since $14 \overset{\mathfrak{G}}{\nsim} 3$.
		\item[-] $N_{H}(6)=\{13,14\}$, trivial.\\
		We now consider the set $N_G(12)-N_H(12)= \{8,9,15,16\}$,
		we have $L_{12}(8)=L_{12}(9)=L_{12}(15)=L_{12}(16)=\{2,3\}$, however we know
		that $16 \overset{H}{\sim} 3,8$:
		\item[-] $N_{H}(2)=\{12,15,8\}$, since $16 \overset{H}{\sim} 8$ (otherwise we have a cycle of length four),
		and $9 \overset{H}{\nsim} 8$.
		\item[-] $N_{H}(3)=\{12,9,16\}$, trivial.\\
		\item[-] $N_{H}(15)=\{2,4,5,9\}$, considering $N_{\mathfrak{G}}(5)-N_{H}(5)$ and refining the known neighbours.
		\item[-] $N_{H}(14)= \{4,6,8,11\}$, similar argument to vertex $15$, considering the vertex $11$.
		\item[-] $N_{H}(4)= \{10,14,15\}$, similar argument to vertex $15$, considering the vertex $15$.
		\item[-] $N_{H}(8)= \{2,14,16\}$, trivial.
		\item[-] $N_{H}(9)= \{3,15\}$, trivial.
		\item[-] $N_{H}(10)= \{4,16\}$, trivial.\\
		It can be seen that the above graph is $\mathcal{G}_1$.
	\end{itemize}
	\item[II] Case 2: $5 \overset{H}{\sim} 7$ and $13 \overset{H}{\sim} 11$:
	\begin{itemize}
		\item[-] $N_{H}(7)=\{1,5\}$: trivial.
		\item[-] $N_{H}(11)=\{1,13,14\}$: trivial.
		\item[-] $N_{H}(5)=N_{G}(7)-N_{H}(7)-\{1,7\}=\{6,7,13,16\}$: since $5$ and $1$
		are the only neighbours of $7$.\\
		We now consider the set $N_G(5)-N_H(5)= \{1,2,4,9,11,14,15\}$,
		we have $L_{5}(2)=\{16\},$ $ L_{5}(4)=\{6,16\},$ $L_{5}(9)=\{16\},$ $L_{5}(14)=\{6,13,16\}
		,$ $ L_{5}(15)=\{13\}$,
		therefore:
		\item[-] $N_{H}(13)=\{5,11,15\}$, since $14 \overset{\mathfrak{G}}{\sim} 11$ (otherwise we have a cycle of length three).
		\item[-] $N_{H}(16)=\{2,4,5,9\}$, since $14 \overset{\mathfrak{G}}{\nsim} 9$, and also $4 \overset{\mathfrak{G}}{\nsim} 13$ but
		$4 \overset{\mathfrak{G}}{\sim} 11$, therefore $4 \overset{H}{\nsim} 6$.
		\item[-] $N_{H}(6)=\{5,14\}$, trivial.\\
		We now consider the set $N_G(12)-N_H(12)= \{8,9,15,16\}$,
		we have $L_{12}(8)=L_{12}(9)=L_{12}(15)=L_{12}(16)=\{2,3\}$, however we know
		that $16 \overset{H}{\sim} 3,8$:
		\item[-] $N_{H}(2)=\{12,15,8\}$, since $16 \overset{H}{\sim} 8$ (otherwise we have a cycle of length four),
		and $9 \overset{H}{\nsim} 8$.
		\item[-] $N_{H}(3)=\{12,9,16\}$, trivial.\\
		\item[-] $N_{H}(15)=\{3,8,10,13\}$, considering $N_{\mathfrak{G}}(5)-N_{H}(5)$ and refining the known neighbours.
		\item[-] $N_{H}(14)= \{4,6,8,11\}$, similar argument to vertex $15$, considering the vertex $11$.
		\item[-] $N_{H}(4)= \{10,14,16\}$, similar argument to vertex $15$, considering the vertex $15$.
		\item[-] $N_{H}(8)= \{2,14,15\}$, trivial.
		\item[-] $N_{H}(9)= \{3,16\}$, trivial.
		\item[-] $N_{H}(10)= \{4,15\}$, trivial.\\
		It can be seen that the above graph is $\mathcal{G}_2$.
	\end{itemize}
\end{itemize}
So $H$ is either isomorphic to $\mathcal{G}_1$ or $\mathcal{G}_2$.
\end{proof}
\end{document}